\colorlet{darkgreen}{green!40!black}
\colorlet{darkblue}{blue!60!black}
\colorlet{darkred}{red!50!black}
\colorlet{safecellcolor}{yellow!5}
\colorlet{goodcellcolor}{green!5}
\colorlet{badcellcolor}{blue!10}
\definecolor{lightblue}{rgb}{0.5,0.6,1.0}
\definecolor{lightgray}{rgb}{0.95,0.95,0.95}
\definecolor{mauve}{rgb}{0.58,0,0.82}
\definecolor{sienna}{rgb}{0.6,0.18,0.09}
\tikzset{
  >=latex,node distance=2cm,on grid,auto, initial text=,
  box state/.style={draw,rectangle,minimum size=8mm,rounded corners},
  prob state/.style={draw,very thick,shape=circle,darkblue,minimum size=3mm,inner sep=0mm},
  every loop/.style={shorten >=0pt},
  accepting state/.style={double distance=1.2pt, outer sep = 0.6pt+\pgflinewidth},
  accepting dot/.style={above=-2.5pt,circle,fill,darkgreen,inner sep=2pt,radius=1pt},
  loop above/.append style={every loop/.append style={out=120, in=60, looseness=6}},
  loop below/.append style={every loop/.append style={out=300, in=240, looseness=6}},
  loop left/.append style={every loop/.append style={out=210, in=150, looseness=6}},
  loop right/.append style={every loop/.append style={out=30, in=330, looseness=6}}
}
\tiny\color{gray},
\lstdefinestyle{compact}{basicstyle=\scriptsize,xleftmargin=2ex,numbersep=6pt}
\lstdefinestyle{inline}{basicstyle=\sffamily}
\definecolor{dkgreen}{rgb}{0,0.6,0}
\definecolor{gray}{rgb}{0.5,0.5,0.5}
\definecolor{mauve}{rgb}{0.58,0,0.82}
\newcommand{\pto}{\xrightharpoondown{}}
\newcommand{\set}[1]{\left\{ #1 \right\}}
\DeclareMathOperator{\FRuns}{Paths}
\newcommand{\Runs}{\Omega}
\DeclareMathOperator{\PSat}{\mathsf{PSyn}}
\DeclareMathOperator{\PSemSat}{\mathsf{PSem}}
\newcommand{\f}{{\mu}} 
\newcommand{\Aa}{\mathcal{A}}
\newcommand{\Bb}{\mathcal{B}}
\newcommand{\Mm}{\mathcal{M}}
\newcommand{\Ff}{\mathcal{F}}
\newcommand{\eE}{\mathbb E}
\newcommand{\Real}{\mathbb R}
\newcommand{\DIST}{{\cal D}}
\DeclareMathOperator{\supp}{supp}
\DeclareMathOperator{\last}{last}
\DeclareMathOperator{\infi}{inf}
\title{Good-for-MDPs Automata for Probabilistic Analysis and Reinforcement Learning}
\author{Ernst Moritz Hahn\inst{1,2} \and Mateo Perez\inst{3}
  \and Sven Schewe\inst{4} \and \\  Fabio Somenzi\inst{3} \and
  Ashutosh Trivedi\inst{3} \and Dominik Wojtczak\inst{4}}
\institute{
  School of EEECS, Queen’s University Belfast, UK
  \and
  State Key Laboratory of Computer Science, Institute of Software, CAS, PRC
  \and
  University of Colorado Boulder, USA
  \and
  University of Liverpool, UK
}
\authorrunning{E. M. Hahn, M. Perez, S. Schewe, F. Somenzi, A. Trivedi, and D. Wojtczak}
\begin{document}
\maketitle

\begin{abstract}
  We characterize the class of nondeterministic $\omega$-automata that
  can be used for the analysis of finite Markov decision processes
  (MDPs).  We call these automata `good-for-MDPs' (GFM).  We show that
  GFM automata are closed under classic simulation as well as under
  more powerful simulation relations that leverage properties of
  optimal control strategies for MDPs.  This closure enables us to
  exploit state-space reduction techniques, such as those based on
  direct and delayed simulation, that guarantee simulation
  equivalence.  We demonstrate the promise of GFM automata by defining
  a new class of automata with favorable properties---they are B\"uchi
  automata with low branching degree obtained through a simple
  construction---and show that going beyond limit-deterministic
  automata may significantly benefit reinforcement learning.
\end{abstract}

\section{Introduction}
\label{sec:introduction}

System specifications are often captured in the form of finite
automata over infinite words ($\omega$-automata), which are then used
for model checking, synthesis, and learning.  Of the commonly-used
types of $\omega$-automata, B\"uchi automata have the simplest
acceptance condition, but require nondeterminism to recognize all
$\omega$-regular languages.  Nondeterministic machines can use
unbounded look-ahead to resolve nondeterministic choices.  However,
important applications---like reactive synthesis or model checking and
reinforcement learning (RL) for Markov Decision Process (MDPs
\cite{Put94})---have a game setting, which restrict
the resolution of nondeterminism to be based on the past.

Being forced to resolve nondeterminism on the fly, an automaton may
end up rejecting words it should accept, so that using it can lead to
incorrect results.  Due to this difficulty, initial solutions to these
problems have been based on deterministic automata---usually with
Rabin or parity acceptance conditions.  For two-player games,
Henzinger and Piterman proposed the notion of \emph{good-for-games (GFG)}
automata \cite{Henzin06}.  These are nondeterministic automata that
simulate \cite{Milner71,Henzin97,Etessa05} a deterministic automaton
that recognizes the same language.  The existence of a simulation
strategy means that nondeterministic choices can be resolved without
look-ahead.


The situation is better in the case of probabilistic model checking,
because the game for which a strategy is sought is played on an MDP
against ``blind nature,'' rather than against a strategic opponent who
may take 
advantage of the automaton's inability to resolve
nondeterminism on the fly.  As early as 1985, Vardi noted that
probabilistic model checking can be performed with B\"uchi automata
endowed with a limited form of nondeterminism \cite{Vardi85}. {\em Limit
deterministic B\"uchi automata (LDBA)} \cite{Courco95,Hahn15,Sicker16b} 
perform no nondeterministic choice after seeing 
an accepting transition.  Still, they recognize all $\omega$-regular
languages and are, under mild restrictions \cite{Sicker16b}, \emph{suitable} for
probabilistic model checking.

\noindent\textbf{Related Work.} 
The production of deterministic and limit deterministic automata for model checking has been intensively studied
\cite{Safra89b,Piterm07,BabiakKRS12,ScheweV12,TsaiTH13,TsaiFVT14,ScheweV14,Sicker16b,Duret-LutzLFMRX16,SickertK16,KretinskyMS18},
and several tools are available to produce different types of automata, incl.\ 
MoChiBA/Owl \cite{Sicker16b,SickertK16,KretinskyMS18},
LTL3BA \cite{BabiakKRS12},
GOAL \cite{TsaiTH13,TsaiFVT14},
SPOT \cite{Duret-LutzLFMRX16},
Rabinizer \cite{kvretinsky2018rabinizer},
and B\"uchifier \cite{kini2017optimal}.

So far, only deterministic and a (slightly restricted
\cite{Sicker16b}) class of limit deterministic automata have been
considered for probabilistic model checking \cite{Vardi85,Courco95,Hahn15,Sicker16b}.
Thus, while there have been advances in the efficient production of such automata \cite{Hahn15,Sicker16b,SickertK16,KretinskyMS18}, the consideration of
suitable LDBAs by 
Courcoubetis and Yannakakis in 1988 \cite{CY88}
has been the last time when a fundamental change in the automata foundation of MDP model checking has occurred.

\noindent\textbf{Contribution.}
The simple but effective observation that simulation preserves the suitability for MDPs (for both traditional simulation and the
AEC simulation we introduce) extends the class of automata that can be used in the analysis of MDPs.
This provides us with three advantages:
The first advantage is that we can now use a wealth of simulation based statespace reduction techniques \cite{Dill91,Somenz00,Gurumu02,Etessa05} on an automaton $\cal A$ (e.g.\ an SLDBA)
that we would otherwise use for MDP model checking.
The second advantage is that we can use $\cal A$ to check if a different language equivalent automaton, such as an NBA $\cal B$ (e.g.\ an NBA from which $\cal A$ is derived)
simulates $\cal A$.
For this second advantage, we can dip into the more powerful class of AEC simulation we define in Section \ref{ssec:aec.simulate}
that use properties of winning strategies on finite MDPs.
While this is not a complete method for identifying GFM automata, our experimental results indicate that the GFM property is quite frequent for 
NBAs constructed from random formulas, and can often be established efficiently, while providing a significant statespace reduction
and thus offering a significant advantage for model checking.

A third advantage is that we can use the additional flexibility to tailor automata for different applications than model checking,
for which specialized automata classes have not yet been developed.
We demonstrate this for model-free reinforcement learning (RL).
We argue that RL benefits from three propoerties that are less important in model checking:
The first---easy to measure---property is a small number of successors,
the second and third, are \emph{cautiousness}, the scope for making wrong decisions, and \emph{forgiveness},
the resilience against making wrong decisions, respectively.

A small number of successors is a simple and natural goal for RL, as the lack of an explicit model means that the product space
of a model and an automaton cannot be evaluated backwards. 
In a forward analysis, it matters that nondeterministic choices have to be modeled by enriching the decisions in the MDPs with
the choices made by the automaton.
For LDBAs constructed from NBAs, this means guessing a suitable subset of the reachable states when progressing to the deterministic part of the automaton,
meaning a number of choices that is exponential in the NBA.
We show that we can instead use \emph{slim automata} in Section \ref{ssec:slim} as a first example of
NBAs that are good-for-MDPs, but not limit deterministic.
They have the appealing property that their branching degree is at most two, while keeping the B\"uchi acceptance mechanism that works well with RL \cite{Hahn19}.
(Slim automata can also be used for model checking, but they don't provide similar advantages over suitable LDBAs there, because the backwards analysis
used in model checking makes selecting the correct successor trivial.)

Cautiousness and forgiveness are further properties, which are---while harder to quantify---very desirable for RL:
LDBAs, for example, suffer from having to make a \emph{correct} choice when moving into the deterministic part of the automaton, and
they have to make this correct choice from a very large set of nondeterministic transitions.
While this is unproblematic for standard model checking algorithms that are based on backwards analysis,
applications like RL that rely on forward analysis
can be badly affected when more (wrong) choices are offered, and when wrong choices cannot be rectified.
Cautiousness and forgiveness are a references to this: an automaton is more \emph{cautious} if it has less scope for making wrong decisions
and more \emph{forgiving} if it allows for correcting previously made decisions
(cf.\ Figure \ref{fig:forgiveness} for an example).
%
%
%
%
Our experiments (cf.\ Section \ref{sec:experiments}) indicate that cautiousness and forgiveness are beneficial for RL. 

\noindent\textbf{Organization of the Paper.}
After the preliminaries, we introduce the ``good-for-MDP'' property (Section \ref{sec:slim}) and show that it is preserved by simulation,
which enables all minimization techniques that offer the simulation property (Section \ref{ssec:simulate}).
In Section \ref{ssec:slim} we use this observation to construct slim
automata---NBAs with a branching degree of 2 that are neither limit
deterministic nor good-for-games---as an example of a class of automata that becomes available for MDP model checking and RL.
We then introduce a more powerful simulation relation, AEC simulation, that suffices to establish that an automaton is good-for-MDPs (Section \ref{ssec:aec.simulate}).
In Section \ref{sec:experiments}, we evaluate the impact of the
contributions of the paper on model checking and reinforcement
learning algorithms.


\section{Preliminaries}
\label{sec:problem}
A \emph{nondeterministic B\"uchi automaton} is a tuple
${\cal A} = \langle \Sigma,Q,q_0,\Delta,\Gamma \rangle$, where
$\Sigma$ is a finite \emph{alphabet}, $Q$ is a finite set of
\emph{states}, $q_0 \in Q$ is the \emph{initial state},
$\Delta \subseteq Q \times \Sigma \times Q$ are transitions, and
$\Gamma \subseteq Q \times \Sigma \times Q$ is the transition-based
\emph{acceptance condition}.

A \emph{run} $r$ of ${\cal A}$ on $w \in \Sigma^\omega$ is an
$\omega$-word $r_0, w_0, r_1, w_1, \ldots$ in
$(Q \times \Sigma)^\omega$ such that $r_0 = q_0$ and, for $i > 0$, it
is $(r_{i-1},w_{i-1},r_i)\in \Delta$.  We write $\infi(r)$ for the set
of transitions that appear infinitely often in the run $r$.  A run $r$
of ${\cal A}$ is \emph{accepting} if
$\infi(r) \cap \Gamma \neq \emptyset$.

The \emph{language}, $L_{\mathcal{A}}$, of ${\cal A}$ (or,
\emph{recognized} by ${\cal A}$) is the subset of words in
$\Sigma^\omega$ that have accepting runs in ${\cal A}$.  A language is
$\omega$-\emph{regular} if it is accepted by a B\"uchi automaton.  An
automaton ${\cal A} = \langle\Sigma,Q,Q_0,\Delta,\Gamma\rangle$ is
\emph{deterministic} if $(q,\sigma,q'),(q,\sigma,q'') \in \Delta$
implies $q'=q''$.  ${\cal A}$ is \emph{complete} if, for all
$\sigma \in \Sigma$ and all $q \in Q$, there is a transition
$(q,\sigma,q')\in \Delta$.  A word in $\Sigma^\omega$ has exactly one
run in a deterministic, complete automaton.

A \emph{Markov decision process (MDP)} $\Mm$ is a tuple
$(S, A, T, \Sigma, L)$ where $S$ is a finite set of states, $A$ is a
finite set of \emph{actions}, $T: S \times A \to \DIST(S)$, where
$\DIST(S)$ is the set of probability distributions over $S$, is the
\emph{probabilistic transition function}, $\Sigma$ is an alphabet, and
$L: S \times A \times S \to \Sigma$ is the \emph{labeling function} of
the set of transitions.  For a state $s \in S$, $A(s)$ denotes the set
of actions available in $s$.  For states $s, s' \in S$ and
$a \in A(s)$, we have that $T(s, a)(s')$ equals $\Pr{}(s' | s, a)$.

A \emph{run} of $\Mm$ is an $\omega$-word
$s_0, a_1, \ldots \in S \times (A \times S)^\omega$ such that
$\Pr{}(s_{i+1} | s_{i}, a_{i+1}) > 0$ for all $i \geq 0$.  A finite
run is a finite such sequence.  For a \emph{run}
$r = s_0, a_1, s_1, \ldots$ we define the corresponding labeled
run as
$L(r) = L(s_0,a_1,s_1), L(s_1,a_2,s_2), \ldots \in
\Sigma^\omega$.  We write $\Runs(\Mm)$ ($\FRuns(\Mm)$) for the set of
runs (finite runs) of $\Mm$ and $\Runs_s(\Mm)$ ($\FRuns_s(\Mm)$) for
the set of runs (finite runs) of $\Mm$ starting from state $s$.  When
the MDP is clear from the context we drop the argument $\Mm$.

A strategy in $\Mm$ is a function $\f : \FRuns \to \DIST(A)$ such that
$\supp(\f(r)) \subseteq A(\last(r))$, where $\supp(d)$ is the support
of $d$ and $\last(r)$ is the last state of $r$.  Let $\Runs^\Mm_\f(s)$
denote the subset of runs $\Runs^\Mm(s)$ that correspond to strategy
$\f$ and initial state $s$.  Let $\Sigma_\Mm$ be the set of all
strategies.  We say that a strategy $\f$ is \emph{pure} if $\f(r)$ is
a point distribution for all runs $r \in \FRuns$ and we say that
$\f$ is \emph{positional} if $\last(r) = \last(r')$ implies
$\f(r) = \f(r')$ for all runs $r, r' \in \FRuns$.

The behavior of an MDP $\Mm$ under a strategy $\f$ with starting state
$s$ is defined on a probability space
$(\Runs^\f_s, \Ff^\f_s, \Pr^\f_s)$ over the set of infinite runs of
$\f$ from $s$.  Given a random variable over the set of infinite runs
$f :\Runs \to \Real$, we write $\eE^\f_s \set{f}$ for the expectation
of $f$ over the runs of $\Mm$ from state $s$ that follow strategy
$\f$.


\section{Good-for-MDP (GFM) Automata}
\label{sec:slim}

Given an MDP $\Mm$ and an automaton
$\Aa = \langle \Sigma, Q, q_0, \Delta, \Gamma \rangle$, we want to
compute an optimal strategy satisfying the objective that the run of
$\Mm$ is in the language of $\Aa$.  We define the semantic
satisfaction probability for $\Aa$ and a strategy $\f$ from state $s$
as:
\begin{align*}
\PSemSat^\Mm_{\Aa}(s, \f) &= \Pr{}_s^\f \{ r {\in} \Runs^\f_s :
  L(r) {\in} L_{\mathcal{A}} \} \text{ and} &
  \PSemSat^\Mm_{\Aa}(s) &= \sup_{\f}\big(\PSemSat^\Mm_{\Aa}(s, \f)\big)
  \,.
\end{align*}
When using automata for the analysis of MDPs, we need a syntactic
variant of the acceptance condition.  Given an MDP
$\Mm = (S, A, T, \Sigma, L)$ with initial state $s_0 \in S$
and an automaton
$\mathcal{A} = \langle \Sigma, Q, q_0, \Delta, \Gamma \rangle$, the
\emph{product}
$\Mm \times \mathcal{A} = (S \times Q, (s_0,q_0), A \times Q,
T^\times, \Gamma^\times)$ is an MDP augmented with an initial state $(s_0,q_0)$
and accepting transitions $\Gamma^\times$.  The function
$T^\times : (S \times Q) \times (A \times Q) \pto \DIST(S \times Q)$
is defined by
\begin{equation*}
  T^\times((s,q),(a,q'))(({s}',{q}')) =
  \begin{cases}
    T(s,a)({s}') & \text{if } (q,L(s,a,{s}'),{q}') \in \Delta \\
    0 & \text{otherwise.}
  \end{cases}
\end{equation*}
Finally,
$\Gamma^\times \subseteq (S \times Q) \times (A \times Q) \times (S
\times Q)$ is defined by $((s,q),(a,q'),(s',q')) \in \Gamma^\times$
if, and only if, $(q,L(s,a,s'),q') \in \Gamma$ and $T(s,a)(s') > 0$.
A strategy $\f$ on the MDP defines a strategy $\f^\times$ on the
product, and vice versa.  We define the syntactic satisfaction
probabilities as
\begin{align*}
  \PSat^\Mm_{\Aa}((s,q), \f^\times) &= \Pr{}_s^\f \{ r \in
  \Runs^{\f^\times}_{(s,q)}(\Mm\times\Aa) : \inf(r) \cap \Gamma^\times
  \neq \emptyset \} \enspace, ~~~~\text{ and} \\
  \PSat^\Mm_{\Aa}(s) &= \sup_{\f^\times}\big(\PSat^\Mm_{\Aa}((s,q_0),
  \f^\times)\big) \enspace.
\end{align*}
Note that $\PSat^\Mm_{\Aa}(s) = \PSemSat^\Mm_{\Aa}(s)$ holds for a
deterministic $\Aa$.  In general,
$\PSat^\Mm_{\Aa}(s)$ $\leq \PSemSat^\Mm_{\cal A}(s)$ holds, but equality
is not guaranteed because the optimal resolution of nondeterministic
choices may require access to future events (see Figure
\ref{fig:not-good-for-MDP}).

\begin{definition}[GFM automata]
  \label{def:gfm}
  An automaton $\Aa$ is \emph{good for MDPs} if, for all
  MDPs $\Mm$, $\PSat^\Mm_{\Aa}(s_0) = \PSemSat^\Mm_{\Aa}(s_0)$ holds,
  where $s_0$ is the initial state of $\Mm$.
\end{definition}
For an automaton to match $\PSemSat^\Mm_{\Aa}(s_0)$,
its nondeterminism is restricted not to rely heavily on the future;
rather, it must possible to resolve the nondeterminism on-the-fly.
For example, the B\"uchi automaton presented on the left of Figure \ref{fig:not-good-for-MDP},
which has to guess whether the next symbol is $\mathtt a$ or $\mathtt b$, is not good for MDPs,
because the simple Markov chain on the right of Figure \ref{fig:not-good-for-MDP} does not allow resolution of its nondeterminism on-the-fly.

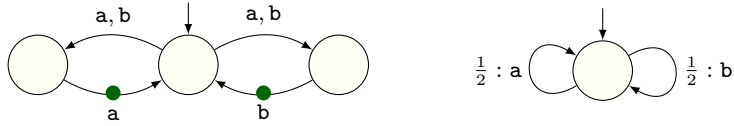
\begin{figure}[t]
  \begin{minipage}{0.59\textwidth}
     \centering
     \begin{tikzpicture}
		\node[state,initial above,fill=safecellcolor] (I) {};
		\node[state,fill=safecellcolor] (A) [left=2cm of I] {};
		\node[state,fill=safecellcolor] (B) [right=2cm of I] {};
		\path[->, bend left]
		(I) edge 			  node[label={[label distance=-2mm]$\mathtt{a,b}$}] {} (B);
		\path[->, bend left]
		(B) edge 			  node [accepting dot,label={[below,label distance=-1.5mm]$\mathtt{b}$}] {} (I);
		\path[->, bend right]
		(I) edge 			  node[label={$\mathtt{a,b}$}] {} (A);
		\path[->, bend right]
		(A) edge 			  node [accepting dot,label={below:$\mathtt{a}$}] {} (I);
	 \end{tikzpicture}
  \end{minipage}
  \begin{minipage}{0.39\textwidth}
	\begin{tikzpicture}
	\node[state,initial above,fill=safecellcolor] (I) {};
	\path[->]
	(I) edge [loop left] node {$\frac{1}{2}: \mathtt{a}$} ();
	\path[->]
	(I) edge [loop right] node {$\frac{1}{2}: \mathtt{b}$} ();
	\end{tikzpicture}
\end{minipage}
  \caption{An NBA, which accepts all words over the alphabet $\{a,b\}$, that is not good for MDPs.
  The dotted transitions are accepting.
  For the Markov chain on the right where
  the probability of $a$ and $b$ is $\frac{1}{2}$,
  the chance that the automaton makes infinitely many correct predictions is $0$}
  \label{fig:not-good-for-MDP}
\end{figure}


There are three classes of automata that are known to be good for MDPs:
(1) deterministic automata,
(2) good for games automata \cite{Henzin06,Klein14}, and
(3) limit deterministic automata that satisfy a few side constraints
\cite{Courco95,Hahn15,Sicker16b}.

A \emph{limit-deterministic} B\"uchi automaton (LDBA) is a
nondeterministic B\"uchi automaton (NBA)
${\cal A} = \langle \Sigma, Q_i \cup Q_f,q_0,\Delta,\Gamma \rangle$ such that
$Q_i \cap Q_f = \emptyset$; $q_0 \in Q_i$;
$\Gamma \subseteq Q_f \times \Sigma \times Q_f$;
$(q,\sigma,q'),(q,\sigma,q'')\in \Delta$ and $q,q'\in Q_f$ implies $q'=q''$; and
$(q,\sigma,q')\in \Delta$ and $q\in Q_f$ implies $q'\in Q_f$.
An LDBA behaves deterministically once it has seen an accepting
transition.
Usual LDBA constructions \cite{Hahn15,Sicker16b} produce GFM automata.
We refer to LDBAs with this property as \emph{suitable} (SLDBAs), cf.\
Theorem \ref{theo:sldbw}.

In the context of RL, techniques based on SLDBAs are particularly
useful, because these automata use the B\"uchi acceptance condition,
which can be translated to reachability goals.  Good for games and
deterministic automata require more complex acceptance conditions,
like parity, that do not have a natural translation into rewards
\cite{Hahn19}.

Using SLDBA~\cite{Courco95,Hahn15,Sicker16b} has the drawback that they naturally have a high branching degree in the initial part,
as they naturally allow for many different transitions to the accepting part of the LDBA.
This can be avoided,
but to the cost of a blow-up and a more complex construction and data structure \cite{Sicker16b}.
We therefore propose an automata construction that produces NBAs with a small branching degree---it never produces more than two successors.
We call these automata \emph{slim}.
The resulting automata are not (normally) limit deterministic, but we show that they are good for MDPs.

Due to technical dependencies we start with presenting a second observation, namely that automata that \emph{simulate} language equivalent GFM automata are GFM.
As a side result, we observe that the same holds for good-for-games automata.
The side result is not surprising, as good-for-games automata were defined through simulation of deterministic automata~\cite{Henzin06}.
But, to the best of our knowledge, the observation from Corollary \ref{cor:reduce} has not been made yet for good-for-games automata.


\subsection{Simulating GFM}
\label{ssec:simulate}
An automaton $\cal A$ \emph{simulates} an automaton $\cal B$ if the duplicator
wins the \emph{simulation game}. 
The simulation game is played between a duplicator and a spoiler, who each
control a pebble, which they move along the edges of $\cal A$ and $\cal B$,
respectively. 
The game is started by the spoiler, who places her pebble on an initial state of $\cal B$.
Next, the duplicator puts his pebble on an initial state of $\cal A$.
The two players then take turns, always starting with the spoiler
choosing an input letter and a transition for that letter in $\cal B$,
followed by the duplicator choosing a transition for the same letter
in $\cal A$.
This way, both players produce an infinite run of their respective automaton.
The duplicator has two ways to win a play of the game: if the run of $\cal A$ he constructs is accepting, and if the run the spoiler constructs on $\cal B$ is rejecting.
The duplicator wins this game if he has a winning strategy, i.e., a
recipe to move his pebble that guarantees that he wins.
Such a winning strategy is ``good-for-games,'' as it can only rely on the past.
It can be used to transform winning strategies of $\cal B$, so that,
if they were witnessing a good for games property or were good for an MDP,
then the resulting strategy for $\mathcal A$ has the same property.

\begin{lemma}[Simulation Properties]
  \label{lem:simple}
  For $\omega$-automata $\cal A$ and $\cal B$ the following holds. 
\begin{enumerate}
 \item If $\cal A$ simulates $\cal B$ then $\cal L(\cal A) \supseteq \cal L(\cal B)$.
 \item If $\cal A$ simulates $\cal B$ and  $\cal L(\cal A) \subseteq \cal L(\cal B)$ then $\cal L(\cal A) = \cal L(\cal B)$.
 \item If $\cal A$ simulates $\cal B$, $\cal L(\cal A) = \cal L(\cal B)$, and $\cal B$ is GFG, then $\cal A$ is GFG.
 \item If $\cal A$ simulates $\cal B$, $\cal L(\cal A) = \cal L(\cal B)$, and $\cal B$ is GFM, then $\cal A$ is GFM.
\end{enumerate} 
\end{lemma}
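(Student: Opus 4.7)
All four parts rely on the same tool: fix a winning duplicator strategy $\tau$ in the simulation game witnessing that $\mathcal{A}$ simulates $\mathcal{B}$. By construction $\tau$ is causal, i.e.\ its moves depend only on the past of the play. I will use $\tau$ as a translator between $\mathcal{B}$'s choices and $\mathcal{A}$'s choices in each of the four items.

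For \textbf{item~(1)}, take $w \in \mathcal{L}(\mathcal{B})$, fix some accepting run of $\mathcal{B}$ on $w$, and let the spoiler play it. The spoiler's run is accepting, so the only way for $\tau$ to win is to produce an accepting run of $\mathcal{A}$ on $w$; hence $w \in \mathcal{L}(\mathcal{A})$. \textbf{Item~(2)} is immediate from (1) combined with the hypothesis $\mathcal{L}(\mathcal{A}) \subseteq \mathcal{L}(\mathcal{B})$. For \textbf{item~(3)}, compose a GFG strategy $\sigma$ for $\mathcal{B}$ with $\tau$: upon reading each input letter, use $\sigma$ to select $\mathcal{B}$'s next transition, present that choice to $\tau$ as the spoiler's move, and output $\tau$'s response as $\mathcal{A}$'s transition. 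The composite strategy still depends only on the past input. For any $w \in \mathcal{L}(\mathcal{A}) = \mathcal{L}(\mathcal{B})$, $\sigma$ delivers an accepting $\mathcal{B}$-run, and $\tau$'s winning condition then forces the resulting $\mathcal{A}$-run to be accepting, so $\mathcal{A}$ is GFG.

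For the main \textbf{item~(4)}, fix an MDP $\mathcal{M}$ and, for arbitrary $\varepsilon > 0$, a strategy $\mu^{\times}$ on $\mathcal{M}\times\mathcal{B}$ achieving at least $\PSat^{\mathcal{M}}_{\mathcal{B}}(s_0) - \varepsilon$. I construct a strategy $\nu^{\times}$ on $\mathcal{M}\times\mathcal{A}$ that keeps a shadow history in $\mathcal{M}\times\mathcal{B}$ synchronised with the actual history: the $\mathcal{M}$-components are identical; $\nu^{\times}$ copies $\mu^{\times}$'s MDP action; and once the MDP reveals the next letter $\sigma$ and the shadow chooses a $\mathcal{B}$-target $q^{\mathcal{B}}$ (sampled according to $\mu^{\times}$), $\nu^{\times}$ presents $(\sigma, q^{\mathcal{B}})$ to $\tau$ as the spoiler's move and takes $\tau$'s response $q^{\mathcal{A}}$ as its $\mathcal{A}$-target. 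This induces a natural coupling of the probability spaces on the two products in which the $\mathcal{M}$- and $\mathcal{B}$-marginals match their distribution under $\mu^{\times}$ and, on every sample path, $\tau$'s winning condition guarantees that the $\mathcal{A}$-run is accepting whenever the $\mathcal{B}$-run is. Hence $\PSat^{\mathcal{M}}_{\mathcal{A}}(s_0) \geq \PSat^{\mathcal{M}}_{\mathcal{B}}(s_0) - \varepsilon$. Since $\mathcal{B}$ is GFM the right-hand side equals $\PSemSat^{\mathcal{M}}_{\mathcal{B}}(s_0) - \varepsilon$, and item~(2) gives $\PSemSat^{\mathcal{M}}_{\mathcal{A}}(s_0) = \PSemSat^{\mathcal{M}}_{\mathcal{B}}(s_0)$ because language equivalence implies equal semantic values. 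Letting $\varepsilon \to 0$ and combining with the routine inequality $\PSat \leq \PSemSat$ yields $\PSat^{\mathcal{M}}_{\mathcal{A}}(s_0) = \PSemSat^{\mathcal{M}}_{\mathcal{A}}(s_0)$, so $\mathcal{A}$ is GFM.

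\textbf{Expected obstacle.} The delicate point in (4) is a timing mismatch: product-MDP actions are pairs $(a, q')$, so formally $\nu^{\times}$ has to commit to the $\mathcal{A}$-target simultaneously with the MDP action, whereas $\tau$ produces $q^{\mathcal{A}}$ only after seeing the letter. I expect the main technical work to be justifying the equivalent reformulation of product strategies as ``pick the action, observe the MDP outcome, then pick the automaton target'' (which gives the same induced distribution on infinite runs), and verifying that the shadow-history construction is measurable when $\mu^{\times}$ is randomised. Once that is in place the simulation-game property of $\tau$ does all the remaining work.
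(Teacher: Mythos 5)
Your proposal takes essentially the same route as the paper: the paper's proof is a four-line argument that (1) follows by translating an accepting run of $\mathcal{B}$ through the duplicator's winning strategy, (2) is immediate from (1), and (3)--(4) ``follow by simulating the behaviour of $\mathcal{B}$ on each run,'' which is exactly your composition of a GFG/GFM witness strategy with the simulation strategy $\tau$. Your write-up is considerably more detailed than the paper's, and the timing mismatch you flag in item (4) (the product action must commit to the automaton successor before the transition label is revealed) is a genuine wrinkle that the paper's one-line proof does not address either.
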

\begin{proof}
  Facts (1) and (2) are well known observations.  Fact (1) holds
  because an accepting run of $\mathcal B$ on a word $\alpha$ can be
  translated into an accepting run of $\cal A$ on $\alpha$ by using
  the winning strategy of $\mathcal A$ in the simulation game.  Fact
  (2) follows immediately from Fact (1).  Facts (3) and (4) follow by
  simulating the behaviour of $\mathcal B$ on each run. \qed
\end{proof}
This observation allows us to use a family of state-space reduction
techniques, in particular those based on language preserving
translations for B\"uchi automata based on simulation relation
\cite{Dill91,Somenz00,Gurumu02,Etessa05}.  This requires stronger
notions of simulations, like direct and delayed simulation
\cite{Etessa05}.  For the deterministic part of an LDBA, one can also
use space reduction techniques for DBAs like \cite{Schewe10}.

\begin{corollary}
  \label{cor:reduce}
  All statespace reduction techniques that turn an NBA $\cal A$ into
  an NBA $\cal B$ that simulates $\cal A$ preserve GFG and GFM: if
  $\cal A$ is GFG or GFM, then $\cal B$ is GFG or GFM, respectively.
\end{corollary}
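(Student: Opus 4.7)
The plan is to derive this as an immediate packaging of Lemma~\ref{lem:simple}. Let $\cal A$ be a GFG (respectively GFM) automaton, and let $\cal B$ be the output of a statespace-reduction technique with the property that $\cal B$ simulates $\cal A$. The goal is to conclude that $\cal B$ is GFG (respectively GFM).

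First I would establish the language equivalence $\cal L(\cal A) = \cal L(\cal B)$. By Fact (1) of Lemma~\ref{lem:simple}, the assumption that $\cal B$ simulates $\cal A$ already yields $\cal L(\cal A) \subseteq \cal L(\cal B)$. The converse inclusion $\cal L(\cal B) \subseteq \cal L(\cal A)$ is the defining property of any statespace-reduction method: such techniques (simulation-based quotienting, transition pruning by dominance, etc.) are language-preserving by construction, and they do not introduce new accepting runs. Alternatively, one can phrase this step through Fact (2) of Lemma~\ref{lem:simple}, which promotes the one-sided inclusion automatically supplied by the reduction into full language equivalence.

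The second and final step is then a direct invocation of Facts (3) and (4) of Lemma~\ref{lem:simple}, applied with the roles of $\cal A$ and $\cal B$ interchanged: since $\cal B$ simulates $\cal A$, $\cal L(\cal B) = \cal L(\cal A)$, and $\cal A$ is GFG (resp.\ GFM), we immediately get that $\cal B$ is GFG (resp.\ GFM). There is essentially no obstacle here beyond the bookkeeping of the first step; the work has already been done in Lemma~\ref{lem:simple}, and the only substantive content of the corollary is the observation that standard simulation-based statespace reductions supply exactly the hypotheses that Facts (3) and (4) require. The only point that deserves a sentence in the written proof is the verification that the concrete reduction in use really is language-preserving, which holds for all the reductions of \cite{Dill91,Somenz00,Gurumu02,Etessa05,Schewe10} cited just before the corollary.
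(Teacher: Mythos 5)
Your proposal is correct and matches the paper's intent exactly: the paper offers no separate proof for Corollary~\ref{cor:reduce}, treating it as an immediate consequence of Facts (3) and (4) of Lemma~\ref{lem:simple} together with the language-preservation of the simulation-based reductions cited just before it. Your explicit note that language equivalence must be supplied by the reduction technique (since simulation alone only gives one inclusion via Fact (1)) is the right bookkeeping and is the same reading the paper relies on.
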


\subsection{Constructing Slim GFM Automata}
\label{ssec:slim}
Let us fix B\"uchi automaton $\mathcal{B}=\big\langle\Sigma,Q,Q_0,\Delta,\Gamma\big\rangle$.
We can write $\Delta$ as a function $\hat{\delta}\colon Q\times \Sigma \to 2^Q$ with $\hat{\delta}\colon (q,\sigma) \mapsto \{q'\in Q \mid (q,\sigma,q') \in \Delta \}$,
which can be lifted to sets, using the deterministic transition function $\delta\colon 2^Q \times \Sigma \rightarrow 2^Q$ with $\delta\colon (S,\sigma) \mapsto \bigcup_{q \in S}\hat{\delta}(q,\sigma)$.
We also define an operator, $\mathsf{ndet}$, that translates deterministic transition functions
$\delta\colon R \times \Sigma \rightarrow R$ to relations, using 
$$\mathsf{ndet}\colon (R \times \Sigma \rightarrow R) \rightarrow 2^{R \times \Sigma \times R} \quad \mbox{ with } \quad
\mathsf{ndet}\colon \delta \mapsto \big\{(q,\sigma,q') \mid q' \in \delta(\{q\},\sigma) \big\}.$$
This is just an easy means to move back and forth between functions
and relations, and helps one to visualize the maximal number of successors.
We next define the variations of subset and breakpoint constructions that are
used to define the well-known limit deterministic GFM automata---which we use in
our proofs---and the slim GFM automata we construct. 
Let $3^Q := \big\{(S,S') \mid S' \subsetneq S \subseteq Q \big\}$ and
$3^Q_+ := \big\{(S,S') \mid S' \subseteq S \subseteq Q \big\}$.
We define the subset notation for the transitions and accepting
transitions as $\delta_S,\gamma_S\colon 2^Q \times \Sigma \rightarrow 2^Q$ with 
\begin{align*}
\delta_S&\colon (S,\sigma) {\mapsto} \big\{ q' \in Q \mid \exists q\in
S.\ (q,\sigma,q') \in \Delta\big\}
\text{ and } \\ 
\gamma_S&\colon (S,\sigma) {\mapsto} \big\{ q' \in Q \mid \exists q\in
S.\ (q,\sigma,q') \in \Gamma\big\}.
\end{align*}
We define the raw breakpoint transitions $\delta_R{\colon} 3^Q {\times} \Sigma {\rightarrow} 3^Q_+$ as
$\bigl((S,S'),\sigma\bigr) {\mapsto} \bigl(\delta_S( S,\sigma),$ $\delta_S(S',\sigma)\cup\gamma_S(S,\sigma)\bigr)$.
In this construction, we follow the set of reachable states (first set) and the
states that are reachable while passing at least one of the accepting
transitions (second set).
To turn this into a breakpoint automaton, we reset the second set to the empty
set when it equals the first; the transitions where we reset the second set are
exactly the accepting ones. 
The breakpoint automaton $\mathcal D =
\big\langle\Sigma,3^Q,(Q_0,\emptyset),\delta_B,\gamma_B\big\rangle$ is defined such that,  
when $\delta_R\colon \big((S,S'),\sigma\big) \mapsto  (R,R')$, then there are
three cases:
\begin{enumerate}
\item
  if $R=\emptyset$, then $\delta_B\big((S,S')\big)$ is undefined (or, if a
  complete automaton is preferred, maps to a rejecting sink), 
\item
  else, if $R \neq R'$, then $\delta_B\colon \big((S,S'),\sigma\big) \mapsto  (R,R')$ is a non-accepting transition, 
 \item otherwise $\delta_B,\gamma_B\colon \big((S,S'),\sigma\big) \mapsto  (R',\emptyset)$ is an accepting transition.
\end{enumerate}
Finally, we define transitions $\Delta_{SB} \subseteq 2^Q \times \Sigma \times 3^Q$ that lead from a subset to a breakpoint construction, and $\gamma_{2,1}\colon 3^Q \times \Sigma \rightarrow 3^Q$ that promote the second set of a breakpoint construction to the first set as follows.

\begin{enumerate}
 \item $\Delta_{SB} =  \Big\{\big(S,\sigma,(S',\emptyset)\big) \mid \emptyset \neq S' \subseteq \delta_S(S,\sigma)\Big\}$ are non-accepting transitions,
 \item if $\delta_S(S',\sigma) = \gamma_S(S,\sigma) = \emptyset$, then $\gamma_{2,1}\big((S,S'),\sigma\big)$ is undefined, and
 \item otherwise $\gamma_{2,1}\colon \big((S,S'),\sigma\big) \mapsto  \big(\delta_S(S',\sigma)\cup\gamma_S(S,\sigma),\emptyset\big)$ is an accepting transition.
\end{enumerate}
We can now define standard limit deterministic good for MDP automata.
\begin{theorem}\cite{Hahn15}\label{theo:sldbw}
$\mathcal A = \big\langle\Sigma,2^Q \cup 3^Q,Q_0,\mathsf{ndet}(\delta_S) \cup \Delta_{SB} \cup \mathsf{ndet}(\delta_B),\mathsf{ndet}(\gamma_B)\big\rangle$
recognizes the same language as $\mathcal B$. It is limit deterministic and good for MDPs.
\end{theorem}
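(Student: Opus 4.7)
The plan is to verify the three claims of the theorem separately: that $\mathcal A$ is limit deterministic, that $L(\mathcal A)=L(\mathcal B)$, and that $\mathcal A$ is good for MDPs. Limit-determinism is immediate: partition the state set as $Q_i:=2^Q$ and $Q_f:=3^Q$. The transitions $\mathsf{ndet}(\delta_S)$ stay within $Q_i$; $\Delta_{SB}$ consists of non-accepting transitions from $Q_i$ into $Q_f$; $\mathsf{ndet}(\delta_B)$ stays within $Q_f$ and is deterministic because $\delta_B$ is a function; and all accepting transitions $\mathsf{ndet}(\gamma_B)$ lie inside $Q_f\times\Sigma\times Q_f$. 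The LDBA conditions from Section~\ref{sec:slim} then follow by inspection.

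For language equivalence, I would handle both inclusions via the standard correctness argument for the breakpoint construction. For $L(\mathcal B)\subseteq L(\mathcal A)$, given an accepting $\mathcal B$-run $\rho=q_0 q_1\ldots$ on $w$ with $\Gamma$-transitions at positions $i_1<i_2<\ldots$, stay in $2^Q$ tracking the reachable subset along $w$ up to some position $N\ge i_1$, then fire a $\Delta_{SB}$-transition to $(\{q_N\},\emptyset)$, which is legal because $q_N$ belongs to the tracked subset. From then on the breakpoint component evolves deterministically under $\delta_B$, and an induction on the $i_k$ shows that the $\Gamma$-transitions of $\rho$ feed the second set often enough to catch up with the first infinitely often, producing accepting resets. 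For $L(\mathcal A)\subseteq L(\mathcal B)$, an accepting $\mathcal A$-run eventually enters $Q_f$ and sees infinitely many resets; a König-lemma argument on the finitely branching tree of $\mathcal B$-paths compatible with the tracked breakpoint pairs extracts an infinite $\mathcal B$-branch visiting $\Gamma$ infinitely often.

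For the GFM property, I would fix an MDP $\mathcal M$ and a strategy $\mu$ with $\PSemSat^{\mathcal M}_{\mathcal A}(s_0,\mu)\ge p$, and construct a product strategy $\mu^\times$ achieving $\PSat^{\mathcal M}_{\mathcal A}((s_0,q_0),\mu^\times)\ge p$. By the standard end-component analysis of finite MDPs, almost every $\mu$-run whose label lies in $L(\mathcal B)$ eventually remains in a bottom end-component $C$ of $\mathcal M^\mu$ whose label language is a subset of $L(\mathcal B)$. Inside such a $C$, once the tracked subset has stabilised, one can select a single $\mathcal B$-state $q_C$ in that subset from which $\mathcal B$ almost surely accepts every continuation through $C$; this choice depends only on the past. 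The product strategy mimics $\mu$ on the subset part, detects the entry into $C$ by finite-memory bookkeeping, and at that moment fires the $\Delta_{SB}$-transition to $(\{q_C\},\emptyset)$; from then on $\mathcal A$ is deterministic and the breakpoint component resets infinitely often almost surely, giving syntactic acceptance with probability at least $p$. The reverse inequality $\PSat\le\PSemSat$ is immediate.

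The main obstacle is the commitment step in the GFM argument, specifically justifying the existence of a suitable $q_C$ in each accepting end-component and showing that it can be fixed on the basis of the past alone. I expect to handle this along the lines of the proof in~\cite{Hahn15}: inside a bottom end-component whose label language lies in $L(\mathcal B)$, the set of $\mathcal B$-states whose suffix language covers the residual behaviour is non-empty and intersects the currently tracked subset, so any choice from that intersection suffices, and the breakpoint resets then follow from recurrent visits to $\Gamma$-transitions along runs through $C$.
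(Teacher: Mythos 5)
First, note that the paper offers no proof of Theorem~\ref{theo:sldbw}: it is imported from~\cite{Hahn15}, so your attempt has to stand on its own. Your partition into three claims and your limit-determinism argument are fine, but the inclusion $L(\mathcal B)\subseteq L(\mathcal A)$ contains a step that fails as stated. You jump to $(\{q_N\},\emptyset)$ at ``some position $N\ge i_1$'' and assert that an induction on the $i_k$ shows the second component catches up with the first infinitely often because the $\Gamma$-transitions of $\rho$ feed it. That induction is unsound: the first component tracks \emph{all} states reachable from $q_N$, including those on branches that diverge from $\rho$ and never see an accepting transition, and such states need never enter the second component. Concretely, take $\Sigma=\{a\}$ and states $q_0,q_1,q_2,p$ with $(q_0,a,q_1),(q_2,a,q_2)\in\Gamma$ and $(q_1,a,q_2),(q_1,a,p),(p,a,p)\in\Delta\setminus\Gamma$. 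The run $q_0q_1q_2q_2\cdots$ on $a^\omega$ is accepting and $i_1=1$, but jumping to $(\{q_1\},\emptyset)$ produces the breakpoint states $(\{q_2,p\},\emptyset),(\{q_2,p\},\{q_2\}),(\{q_2,p\},\{q_2\}),\ldots$: no reset ever occurs. A good jump point exists here ($N=2$ gives $(\{q_2\},\emptyset)$, which resets at every step), but your argument does not distinguish it from the bad one; identifying a suitable $N$ (or a suitable non-singleton $S'$ in $\Delta_{SB}$) and proving that it works is the actual content of the completeness direction, and it is missing.

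The GFM part inherits the same gap and adds two inaccuracies. For a general history-dependent randomized $\mu$, the object $\mathcal M^\mu$ is not a finite MDP, so ``bottom end-components of $\mathcal M^\mu$'' is not well-typed; the standard route is to work with positional strategies on a product with a deterministic automaton for $L(\mathcal B)$, or with the accepting end-components of $\mathcal M\times\mathcal B$ as in Theorem~\ref{th:end-comp}. Moreover, an end-component almost all of whose runs are accepted need not have its entire label language inside $L(\mathcal B)$ (a component offering letters $a$ and $b$ in which almost every run under the strategy sees infinitely many $a$'s still contains the labeled run $b^\omega$), so ``label language a subset of $L(\mathcal B)$'' is the wrong invariant. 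Finally, even granting a state $q_C$ from which almost every continuation admits an accepting run of $\mathcal B$, concluding that the breakpoint component started at $(\{q_C\},\emptyset)$ resets infinitely often almost surely is precisely the step refuted above in the worst case; it needs the end-component analysis of $\mathcal M\times\mathcal B$ to be carried out, which is the part you explicitly defer to~\cite{Hahn15} and which is where the real work lies.
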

We now show how to construct a slim GFM B\"uchi automaton.
\begin{theorem}[Slim GFM B\"uchi Automaton]
\label{theo:slim}
The automaton
\begin{equation*}
\mathcal S =
\bigl\langle\Sigma,3^Q,(Q_0,\emptyset),\mathsf{ndet}(\delta_B) \cup
\mathsf{ndet}(\gamma_{2,1}),\mathsf{ndet}(\gamma_B) \cup
\mathsf{ndet}(\gamma_{2,1})\bigr\rangle
\end{equation*}
simulates $\cal A$.
$\mathcal S$ is slim, language equivalent to $\mathcal B$, and good for MDPs.
\end{theorem}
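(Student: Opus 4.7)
The plan is to verify the four claims---simulation of $\mathcal A$, slimness, language equivalence with $\mathcal B$, and the GFM property---in that order. The GFM property then follows by Lemma~\ref{lem:simple}(4) combined with Theorem~\ref{theo:sldbw}, and one half of the language equivalence follows by Lemma~\ref{lem:simple}(1). Slimness is an immediate inspection: from any $(S,S') \in 3^Q$ and letter $\sigma$, $\mathcal S$ offers at most one $\delta_B$-successor and at most one $\gamma_{2,1}$-successor, so every state has at most two outgoing transitions per letter.

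For the \emph{simulation of $\mathcal A$}, I would describe a duplicator strategy on $\mathcal S$ that maintains an inclusion invariant with respect to the spoiler's state on $\mathcal A$. While the spoiler sits at $S \in 2^Q$, the duplicator stays at some $(S, U') \in 3^Q$ and matches each $\delta_S$ move of the spoiler with $\delta_B$, which preserves equality of the first component. When the spoiler takes a $\Delta_{SB}$-jump from $S$ to some $(T,\emptyset)$ with $T \subseteq \delta_S(S,\sigma)$, the duplicator again responds with $\delta_B$; its first component becomes $\delta_S(S,\sigma) \supseteq T$ while its second trivially contains $\emptyset$. Inside the breakpoint phase the invariant is $T \subseteq U$ and $T' \subseteq U'$: the duplicator uses $\delta_B$ on non-accepting spoiler moves and switches to $\gamma_{2,1}$ precisely when the spoiler's $\delta_B$ fires the breakpoint. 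The key calculation is that spoiler's breakpoint firing gives $\delta_S(T',\sigma) \cup \gamma_S(T,\sigma) = \delta_S(T,\sigma)$; by monotonicity this set is contained in $\delta_S(U',\sigma) \cup \gamma_S(U,\sigma)$, which is exactly the first component of the duplicator's $\gamma_{2,1}$-successor, while both second components reset to $\emptyset$. Thus every accepting spoiler transition is mirrored by an accepting $\gamma_{2,1}$-transition, and any accepting spoiler run forces an accepting duplicator run.

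For \emph{language equivalence}, the direction $L(\mathcal B) \subseteq L(\mathcal S)$ is immediate from Lemma~\ref{lem:simple}(1), the simulation, and Theorem~\ref{theo:sldbw}. For the converse $L(\mathcal S) \subseteq L(\mathcal B)$, I would apply K\"onig's lemma to the finitely-branching tree of $\mathcal B$-paths that live inside the first components of a fixed accepting $\mathcal S$-run; the accepting $\mathcal S$-transitions (both $\gamma_B$ and $\gamma_{2,1}$) witness that every state in the new first component is reachable via a $\mathcal B$-path that has recently traversed an accepting transition, so an infinite branch of the tree yields an accepting $\mathcal B$-run. The \emph{GFM} property then falls out of Lemma~\ref{lem:simple}(4) applied to the simulation and language equivalence just established, using that $\mathcal A$ is GFM by Theorem~\ref{theo:sldbw}.

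The main obstacle is the simulation argument: one must verify that the duplicator's strategy always responds to an accepting spoiler transition with an accepting $\mathcal S$-transition. The delicate sub-case arises when the duplicator's and spoiler's breakpoints fire out of phase---if the duplicator's breakpoint fires before the spoiler's, the second-component invariant $T' \subseteq U'$ can be strained, and one must argue that this mismatch never prevents an accepting $\gamma_{2,1}$ response when the spoiler next fires its own breakpoint. The K\"onig's lemma extraction for the reverse language inclusion is a secondary but more routine issue, requiring a careful tie between each $\gamma_{2,1}$ promotion and a genuine accepting step in $\mathcal B$.
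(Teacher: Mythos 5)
Your overall plan (slimness by inspection, a phase-by-phase duplicator strategy for the simulation of $\mathcal A$, K\H{o}nig's lemma for $L(\mathcal S)\subseteq L(\mathcal B)$, then Lemma~\ref{lem:simple} and Theorem~\ref{theo:sldbw} for GFM) matches the paper, and the slimness, language-inclusion, and GFM steps are fine. However, there is a genuine gap in the simulation argument, precisely at the point you flag as ``the main obstacle,'' and the resolution you propose for it is not the right one. Your strategy has the duplicator play $\gamma_{2,1}$ \emph{exactly when the spoiler's breakpoint fires}, relying on the invariant $T'\subseteq U'$ on second components. But that invariant is not maintainable: whenever the duplicator's own $\delta_B$-breakpoint fires out of phase, $U'$ is reset to $\emptyset$ while $T'$ may be nonempty. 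At the spoiler's next breakpoint the duplicator's $\gamma_{2,1}$ candidate has first component $\delta_S(U',\sigma)\cup\gamma_S(U,\sigma)$, which after such a reset degenerates to $\gamma_S(U,\sigma)$ and need \emph{not} contain the spoiler's new first component $\delta_S(T,\sigma)$ (it may even be that $\gamma_{2,1}$ is undefined). So your claim that ``every accepting spoiler transition is mirrored by an accepting $\gamma_{2,1}$-transition'' is false, and the hoped-for statement that ``this mismatch never prevents an accepting $\gamma_{2,1}$ response'' cannot be proved because it does not hold.

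The paper closes this hole differently, in two coupled moves. First, the duplicator's choice is \emph{guarded}: he computes $(\bar T,\emptyset)=\gamma_{2,1}\big((T,T'),\sigma\big)$ and plays $\gamma_{2,1}$ only if $\bar S\subseteq\bar T$, falling back to $\delta_B$ otherwise, so the first-component invariant (your $T\subseteq U$) is preserved unconditionally. Second, the liveness claim is weakened to: between any two accepting spoiler transitions the duplicator takes \emph{some} accepting transition, and is proved by contradiction --- if the duplicator takes no accepting transition in the interim, then in particular his breakpoint never fires, so the second-component inclusion \emph{is} maintained on that stretch, and therefore the guard succeeds at the spoiler's next accepting step, yielding an accepting $\gamma_{2,1}$ move after all. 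In the out-of-phase case the required accepting duplicator transition is the $\gamma_B$-transition that fired his breakpoint (accepting in $\mathcal S$), not a $\gamma_{2,1}$ response. To repair your proof you need both the guard on the $\gamma_{2,1}$ move and this weaker, contradiction-based liveness statement; as written, your strategy can violate the inclusion invariant, and your stated target for resolving the out-of-phase case is unprovable.
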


\begin{proof}

$\cal S$ is slim: its set of transitions is the union of two sets of deterministic transitions.
We show that $\mathcal S$ simulates $\mathcal A$
by defining a strategy in the simulation game, which ensures that,
if the spoiler produces a run $S_0 \ldots S_{j-1} (S_j,S_j') (S_{j+1},S_{j+1}')\ldots$ for $\cal A$, then
the duplicator produces a run $(T_{0},T_{0}')$ $\ldots (T_{j-1},T_{j-1}') (T_{j},T_{j}') (T_{j+1},T_{j-1}')\ldots$ for $\cal S$, such that
(1) $S_i \subseteq T_i$ holds for all $i \in \omega$, and
(2) if there are two accepting transitions $\big((S_{k-1},S_{k-1}'),\sigma_k, (S_{k},S_{k}')\big)$ and  $\big((S_{l-1},S_{l-1}'),\sigma_l, (S_{l},S_{l}')\big)$
with $k<l$, there is an $k < m \leq l$, such that
$\big((T_{m-1},T_{m-1}'),$ $\sigma_m (T_{m},T_{m}')\big)$ is accepting.

To obtain this, we describe a winning strategy for the duplicator while arguing inductively that it mainains (1).
Note that (1) holds initially ($T_0 = S_0$, induction basis).
\\[2mm]
\noindent\textbf{Initial Phase:}
Every move of the spoiler---with some letter $\sigma$---that uses a transition from $\delta_S$---the subset part of $\cal A$---is followed by a move from $\delta_B$ with the same letter $\sigma$.
When the duplicator follows this strategy the following holds:
when, after a pair of moves, the pebble of the spoiler is on state $S \subseteq Q$, then the pebble of the duplicator is on some state $(S,S')$.
In particular, (1) is preserved during this phase (induction step).
\\[2mm]
\noindent\textbf{Transition Phase:}
The one spoiler move---with some letter $\sigma$---that uses a transition from $\Delta_{SB}$---the transition to the breakpoint part of $\cal A$---is followed by a move from $\delta_B$ with the same letter $\sigma$.
When the duplicator follows this strategy, and when, after the pair of moves, the pebble of the spoiler is on state $(S,\emptyset)$, then the pebble of the duplicator is on some state $(T,T')$ with $S\subseteq T$.
In particular, (1) is preserved (induction step). 
\\[2mm]
\noindent\textbf{Final Phase:}
When the spoiler moves from some state $(S,S')$---with some letter $\sigma$---that uses a transition from $\delta_B$---the breakpoint part of $\cal A$---%
to $(\bar{S},\bar{S}')$, and when the duplicator is in some state $(T,T')$, then the duplicator does the following.
He calculates $(\bar{T},\emptyset)=\gamma_{2,1}\big((T,T'),\sigma\big)$ and checks if $\bar{S} \subseteq \bar{T}$ holds.
If  $\bar{S} \subseteq \bar{T}$ holds, he plays this transition from $\gamma_{2,1}$ (with the same letter $\sigma$).
Otherwise, he plays the transition from $\delta_B$ (with the same letter $\sigma$).
In either case (1) is preserved (induction step), which closes the inductive argument for (1).

Note that no accepting transition of $\mathcal A$ is passed in the initial or tansition phase, so the two accepting transitions from (2) must both fall into the final phase.

To show (2), we first observe that $S_k' = \emptyset$, and thus $S_k' \subseteq T_k'$ holds.
Assuming for contradition that all transitions of $\mathcal S$ for $\sigma_{k+1}\ldots\sigma_{l-1}$ are non-accepting,
we obtain---using (1)---by a straightforward inductive argument that $S_i' \subseteq T_i'$ for all $i$ with $k{\leq} i {<} l$.
(Note that transitions in $\delta_B$ are accepting when they are also be in $\gamma_B$.)

Using that $S_l = \delta_S(S_{l-1}',\sigma_l) \cup \gamma_S(S_{l-1},\sigma_l) \subseteq  \delta_S(T_{l-1}',\sigma_l) \cup \gamma_S(T_{l-1},\sigma_l)$ holds,
the spoiler uses an accepting transition from $\gamma_{2,1}$ in this step.


Using Lemma \ref{lem:simple}, it now suffices to show that the language of $\mathcal S$ is included in the language of $\mathcal B$.
To show this, we simply argue that an accepting run $\rho = (Q_0,Q_0') , (Q_1,Q_1') ,$ $ (Q_2,Q_2') , (Q_3,Q_3') , \ldots$ of $\mathcal S$ on an input word $\alpha = \sigma_0,\sigma_1,\sigma_2,\ldots$ can be interpreted as a forest of finitely many finitely branching trees of overall infinite size, where all infinite branches are accepting runs of $\mathcal B$.
K\H{o}nig's Lemma then proves the existence of an accepting run of $\mathcal B$.

This forest is the usual one.
The nodes are labeled by states of $\mathcal B$, and the roots (level 0) are the initial states of $\mathcal B$.
Let $I = \bigl\{ i \in \mathbb N \mid \big((Q_{i-1},Q_{i-1}'),\sigma_{i-1},(Q_{i},Q_{i}')\big) \in \Gamma := \mathsf{ndet}(\gamma_B) \cup \mathsf{ndet}(\gamma_{2,1})\bigr\}$ be the set of positions after accepting transitions in $\rho$.
We define the predecessor function $\mathsf{pred}\colon \mathbb N \rightarrow I \cup \{0\}$ with $\mathsf{pred}\colon i \mapsto \max\big\{j \in I \cup \{0\} \mid j < i\big\}$.

We call a node with label $q_l$ on level $l$ an end-point if one of the following applies:
(1) $q_l \notin Q_l$ or 
(2) $l \in I$ and for all $j$ such that $\mathsf{pred}(l) \leq j < l$, where $q_j$ is the label of the ancestor of this node on level $j$,
we have $(q_j,\sigma_j,q_{j+1}) \notin \Gamma$.

(1) may only happen after a transition from $\gamma_{2,1}$ has been taken, and the $q_l$ is not among the states that is traced henceforth.
(2) identifies parts of the run tree that do not contain an accepting transition.

A node labeled with $q_l$ on level $l$ that is not an endpoint has $\big|\delta_S(q_l,\sigma_l)\big|$ children, labeled with the different elements of $\delta_S(q_l,\sigma_l)$.
It is now easy to show by induction over $i$ that the following holds.
\begin{enumerate}
 \item For all $q \in Q_i$, there is a node on level $i$ labeled with $q$.
 \item For $i \notin I$ and $q \in Q_i'$, there is a node labeled $q$ on level $i$, a $j$ with $\mathsf{pred}(i) \leq j < i$,
 and ancestors on level $j$ and $j+1$ labeled $q_j$ and $q_{j+1}$, such that
$(q_j,\sigma_j,q_{j+1}) \in \Gamma$. (The `ancestor' on level $j+1$ might be the state itself.)

For $i \in I$  and $q \in Q_i'$, there is a node labeled $q$ on level $i$, which is not an end point.
\end{enumerate}
Consequently, the forest is infinite, finitely branching, and finitely
rooted, and thus contains an infinite path.  By construction, this
path is an accepting run of $\cal B$. \qed


\end{proof}

\begin{figure}[t]
  \centering
  \begin{tikzpicture}[every text node part/.style={align=center}]
    \begin{scope}
      \node[state,shape=ellipse,darkgreen,fill=goodcellcolor,inner sep=1pt] (S01) at (4,3) {$\{0,1\}$};
      \coordinate [left=1cm of S01] (ild);
      \node[state,shape=ellipse,fill=badcellcolor,inner sep=1pt] (S1-e) [below left=2cm and 2.75cm of
      S01] {$\{1\}$\\$\emptyset$};
      \node[state,shape=ellipse,fill=badcellcolor,inner sep=1pt] (S0-e) [right=2.75cm
      of S1-e] {$\{0\}$\\$\emptyset$};
      \node[state,shape=ellipse,initial below,fill=safecellcolor,inner sep=1pt] (S01-e) [right=2.75cm of S0-e]
      {$\{0,1\}$\\$\emptyset$}; 
      \node[state,shape=ellipse,fill=safecellcolor,inner sep=1pt] (S01-0) [right=4cm of S01-e]
      {$\{0,1\}$\\$\{0\}$}; 
      \path[->]
      (ild) edge[darkgreen] (S01)
      (S01) edge[darkgreen,loop right] node {$a,b$} ()
      edge[darkgreen,swap] node[sloped,above] {$a,b$} (S1-e)
      edge[darkgreen] node[sloped,above] {$a,b$} (S0-e)
      edge[darkgreen] node[sloped,above] {$a,b$} (S01-e)
      (S1-e) edge node[accepting dot, label={$a$}] {} (S0-e)
      (S0-e) edge node {$a,b$} (S01-e)
      (S01-e) edge[bend left=30] node {$a$} (S01-0)
      edge [loop above, looseness=4, out=120, in=60] node {$b$} (S01-e)
      (S01-0) edge node[accepting dot, label={$a,b$}] {} (S01-e)
      (S01-0) edge[red,dashed,out=210,in=-30] node[accepting
      dot,label={$a,b$}] {} (S01-e);
    \end{scope}
    \begin{scope}[xshift=10cm,yshift=2.75cm]
      \node[state,shape=ellipse,initial above,fill=safecellcolor] (S0) at (0,0) {$0$};
      \node[state,shape=ellipse,initial above,fill=safecellcolor] (S1) [right=2.25cm of S0] {$1$};
      \path[->]
      (S0) edge[bend left] node {$a,b$} (S1)
      edge[loop left] node {$a,b$} ()
      (S1) edge[bend left] node[accepting dot, label={$a$}] {} (S0)
      ;
    \end{scope}
  \end{tikzpicture}
  \caption{An NBA for $\always\eventually a$ (in the upper right
    corner) together with an SLDBA and a slim NBA constructed
    from it.  The SLDBA and the slim NBA are shown sharing their common
    part.  State $\{0,1\}$, produced by the subset construction, is
    the initial state of the SLDBA, while state
    $(\{0,1\},\emptyset)$---the initial state of the breakpoint
    construction---is the initial state of the slim NBA.  States
    $(\{1\},\emptyset)$ and $(\{0\},\emptyset)$ are states of the
    breakpoint construction that only belong to the SLDBA because they
    are not reachable from $(\{0,1\},\emptyset)$.  The transitions out
    of $\{0,1\}$, except the self loop, belong to $\Delta_{SB}$.  The
    dashed-line transition from $(\{0,1\},\{0\})$ belongs to
    $\gamma_{2,1}$}
  \label{fig:slim-GFa}
\end{figure}
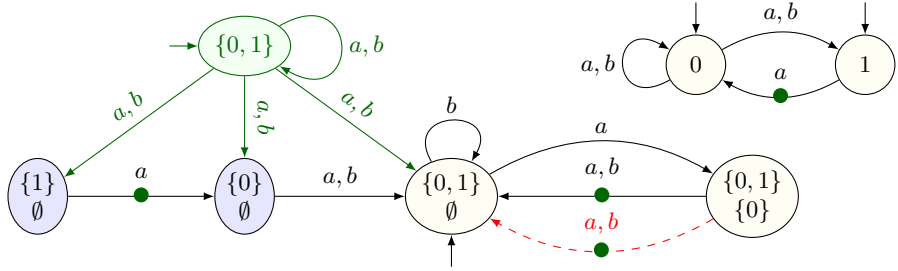


The resulting automata are simple in structure and enable symbolic implementation (See Fig.~\ref{fig:slim-GFa}).
It cannot be expected that there are much smaller good for MDP automata, as its
explicit construction is the only non-polynomial part in model checking MDPs. 

\begin{theorem}
Constructing a GFM B\"uchi automaton $G$ that recognizes the models of an LTL formula $\varphi$ requires time doubly exponential in $\varphi$, and
constructing a GFM B\"uchi automaton $G$ that recognizes the language of an NBA $\mathcal B$ requires time exponential in $\mathcal B$.
\end{theorem}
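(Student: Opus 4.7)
The theorem bundles together upper and lower bounds. I would dispatch the upper bounds directly from the constructions already in hand, and then derive the lower bounds from the (tight) complexity of MDP model checking, exploiting the fact that once a GFM automaton is available the rest of the analysis is polynomial.

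\emph{Upper bounds.} For an LTL formula $\varphi$, a standard tableau/Vardi--Wolper style construction produces an NBA $\mathcal B_\varphi$ with $2^{O(|\varphi|)}$ states in time polynomial in its output. Feeding $\mathcal B_\varphi$ into the slim construction of Theorem~\ref{theo:slim} yields a GFM B\"uchi automaton whose state set is contained in $3^{Q_{\mathcal B_\varphi}}$, hence of size at most $3^{2^{O(|\varphi|)}} = 2^{2^{O(|\varphi|)}}$, and the construction is carried out on-the-fly in time polynomial in its output. For the NBA case, Theorem~\ref{theo:slim} applied to $\mathcal B$ with state set $Q$ gives $|G| \le |3^Q| \le 2^{O(|Q|)}$ in singly-exponential time.

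\emph{Lower bounds.} The crucial structural fact is that MDP model checking against a GFM automaton is polynomial. Indeed, for any GFM automaton $G$ and any MDP $\Mm$ with initial state $s_0$, Definition~\ref{def:gfm} gives $\PSemSat^\Mm_G(s_0)=\PSat^\Mm_G(s_0)$, and the latter quantity is computed on the product MDP $\Mm\times G$ by the standard accepting-end-component analysis in time polynomial in $|\Mm|\cdot|G|$. Consequently, if one could construct a language-equivalent GFM automaton from an LTL formula $\varphi$ in time $T(|\varphi|)$, then quantitative MDP model checking against $\varphi$ would run in time polynomial in $T(|\varphi|)\cdot|\Mm|$. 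Since this problem is 2EXPTIME-hard (Courcoubetis--Yannakakis), $T$ must be at least doubly exponential. An analogous argument, using the corresponding hardness of MDP model checking against an NBA, forces the NBA-to-GFM construction to be at least singly exponential in $|\mathcal B|$.

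\emph{Main obstacle.} The construction is handed to us by Theorem~\ref{theo:slim}, so the only real subtlety is the lower-bound half, and in particular the NBA case: the complexity of MDP-against-NBA model checking is classically phrased as PSPACE-hardness, which yields an exponential time lower bound only under standard complexity-theoretic separations. A more robust route, which I would prefer, is to exhibit an explicit family $\{\mathcal B_n\}_{n\in\Nat}$ of NBAs whose smallest language-equivalent GFM B\"uchi automata provably have $2^{\Omega(n)}$ states---adapting the classical determinization blow-up witnesses and observing that any GFM automaton also resolves nondeterminism on-the-fly against the uniform Markov chain, so it must distinguish exponentially many ``memories.'' This turns the lower bound into an unconditional state-complexity statement and then into an unconditional time lower bound, closing the gap with the upper bound.
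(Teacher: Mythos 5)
Your lower-bound argument is exactly the paper's proof: a GFM automaton reduces MDP model checking to a polynomial analysis of the product $\Mm \times G$, so a faster (hence smaller) construction of $G$ would contradict the hardness of MDP model checking, and the upper bounds you add are already implicit in Theorem~\ref{theo:slim}. The one correction is that your ``main obstacle'' is not an obstacle: for MDPs (as opposed to Markov chains) model checking against a nondeterministic B\"uchi automaton is \textsc{ExpTime}-hard, not merely \textsc{PSpace}-hard---this is precisely the result the paper invokes from \cite{Courco95}---so the conditional route already yields the exponential lower bound and no explicit state-complexity witness family is required.
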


\begin{proof}
As resulting automata are GFM, they can be used to model check MDPs $\cal M$
against this property, with cost polynomial in product of $\cal M$ and $\cal
G$. 
If $\cal G$ could be produced faster (and if they could, consequently be
smaller) than claimed, it will contradict the 2-{\sc ExpTime-} and {\sc
  ExpTime}-hardness \cite{Courco95} of these model checking problems. \qed
\end{proof}

%

\section{Accepting End-Component Simulation}
\label{ssec:aec.simulate}

An \emph{end-component}~\cite{deAlfa98,Baier08} of an MDP $\Mm$ is a
sub-MDP $\Mm'$ of $\Mm$ such that its underlying graph is strongly
connected.  A \emph{maximal} end-component is maximal under set-inclusion.
Every state of an MDP belongs to at most one maximal end-component.

\begin{theorem}[End-Component Properties.  Theorem 3.1 and Theorem 4.2
  of~\cite{deAlfa98}]
  \label{th:end-comp}
  Once an end-component $C$ of an MDP is entered, there is a strategy
  that visits every state-action combination in $C$ infinitely often with probability
  $1$ and stays in $C$ forever.
  
  For a product MDP, an \emph{accepting end-component} (AEC) is an
  end-component that contains some transition in $\Gamma^\times$.
  There is a positional pure strategy for an AEC $C$ that surely stays
  in $C$ and almost surely visits a transition in $\Gamma^\times$
  infinitely often.
  
  For a product MDP, there is a set of disjoint accepting
  end-components such that, from every state, the maximal probability
  to reach the union of these accepting end-components is the same as
  the maximal probability to satisfy $\Gamma^\times$.  Moreover, this
  probability can be realized by combining a positional pure
  (reachability) strategy outside of this union with the
  aforementioned positional pure strategies for the individual AECs.
\end{theorem}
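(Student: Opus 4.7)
The three parts build on each other, so the plan is to prove them in order, leveraging only standard facts about finite Markov chains (Borel--Cantelli, recurrence of finite irreducible chains) and the fact that reachability objectives in finite MDPs admit pure positional optimal strategies.

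For part (1), I would fix an end-component $C$ and construct a \emph{fair} strategy $\f_C$ as follows. Enumerate all state--action pairs $(s,a)$ with $s\in C$ and $a\in A(s)$ restricted to actions of $C$. From any state $s\in C$, the strategy behaves in phases: in phase $k$ it tries to realize pair $(s_k,a_k)$ by first playing a pure positional reachability strategy that maximizes the probability of reaching $s_k$ while staying in $C$ (strong connectivity guarantees this probability is positive and bounded below by some $p>0$), and upon arrival plays $a_k$; then it moves to phase $k+1$. Since each phase succeeds with probability at least $p$ in a uniformly bounded number of steps, Borel--Cantelli gives that every pair is visited infinitely often almost surely, and by construction the run never leaves $C$.

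For part (2), I would first observe that by part (1) the strategy $\f_C$ almost surely visits the accepting transition in $\Gamma^\times$ infinitely often. To obtain a \emph{pure positional} strategy with the same almost-sure guarantee, I would invoke the standard result that in a finite MDP, any Büchi objective (visit $\Gamma^\times$ infinitely often while staying in a given set) is realized almost surely from every state of the winning region by some pure positional strategy. One clean way is to reduce it to a reachability problem: iteratively strip states from $C$ from which no positional strategy can reach $\Gamma^\times$ while staying in $C$; the fact that (1) provides an almost-sure witness proves that this stripping never removes anything, so the positional strategy constructed by combining one step of action selection leading toward $\Gamma^\times$ with a reachability strategy back suffices. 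Staying in $C$ is automatic because only $C$-internal actions are used.

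For part (3), I would use the well-known fact that, under \emph{any} strategy, the set of states visited infinitely often is almost surely an end-component. Consequently, a run satisfies $\Gamma^\times$ with positive probability only if it is almost surely absorbed into some end-component whose transitions infinitely often include one from $\Gamma^\times$; such an end-component is contained in some maximal AEC. So the problem decomposes: outside the union $U$ of all AECs use a pure positional reachability strategy maximizing $\Pr(\lozenge U)$, and inside each AEC switch to the pure positional strategy from part (2). The maximal probabilities match because any $\f$ achieving $\PReach(\Gamma^\times)$ can be transformed into the described combined strategy without loss, using that once $U$ is reached the AEC strategy guarantees $\Gamma^\times$ almost surely.

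The main obstacle is the move from an almost-sure strategy (as produced naturally by the fair construction in~(1)) to a \emph{pure positional} strategy in~(2); this is where the technical heart of the argument lies, and it is exactly the point where finiteness of the state and action sets is essential. Everything else—strong connectivity, Borel--Cantelli, the end-component decomposition of limit sets, and the reachability/AEC glue in~(3)—is routine once that step is in hand.
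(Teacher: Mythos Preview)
The paper does not give its own proof of this theorem: it is quoted verbatim as Theorems~3.1 and~4.2 of de~Alfaro's thesis~\cite{deAlfa98} and used as a black box (notably in the proof of Theorem~\ref{theo:GAEC}). So there is nothing in the paper to compare your proposal against.

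That said, your sketch is the standard line of argument for these classical results and is essentially sound. The fair round-robin construction in~(1), the limit-set/end-component decomposition underlying~(3), and the reachability-plus-AEC glue are all routine. You correctly flag the passage in~(2) from an almost-sure witness to a \emph{pure positional} strategy as the only nontrivial step; note that the quickest route there is not the iterative stripping you describe (which, as stated, risks circularity: you need a positional witness to argue nothing is stripped, but that is what you are trying to build) but rather the direct observation that uniform randomization over $C$'s actions already yields a positional almost-sure strategy, after which one appeals to the general fact that almost-sure B\"uchi objectives on finite MDPs admit pure positional optimal strategies. One further small point on~(3): to get the \emph{disjoint} set of AECs the statement promises, take the maximal end-components that happen to be accepting; your phrase ``maximal AEC'' is ambiguous between this and ``maximal among AECs,'' and only the former are guaranteed disjoint.
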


Lemma~\ref{lem:simple} shows that the GFM property is preserved by
simulation: For language-equivalent automata $\Aa$ and $\Bb$, if $\Aa$
simulates $\Bb$ and $\Bb$ is GFM, then $\Aa$ is also GFM.  However, a
GFM automaton may not simulate a language-equivalent GFM automaton.
(See Figure~\ref{fig:AECexample}.)  Therefore we introduce a coarser
preorder, Accepting End-Component (AEC) simulation, that exploits the
finiteness of the MDP $\Mm$.  We rely on Theorem \ref{th:end-comp} to
focus on positional pure strategies for $\Mm \times \Bb$.  Under such
strategies, $\Mm \times \Bb$ becomes a Markov chain \cite{Baier08}
such that almost all its runs have the following properties:
\begin{itemize}
 \item They will eventually reach a leaf strongly connected component (LSCC) in the Markov chain.
 \item If they have reached a LSCC $L$, then, for all $\ell \in \mathbb N$,
 all sequences of transitions of length $\ell$ in $L$ occur infinitely often,
 and no other sequence of length $\ell$ occurs.
\end{itemize}
With this in mind, we can intuitively ask the spoiler to pick a run
through this Markov chain, and to disclose information about this run.
Specifically, we can ask her to signal when she has reached an
accepting LSCC%
\footnote{There is nothing to show when a non-accepting LSCC is
  reached---if $\Bb$ rejects, then $\Aa$ may reject too---nor when no
  LSCC is reached, as this occurs with probability $0$.}  in the
Markov chain, and to provide information about this LSCC, in
particular information entailed by the full list of sequences of
transitions of some fixed length $\ell$ described above.  Runs that
can be identified to either not reach an accepting LSCC, to visit
transitions not in this list, or to visit only a subset of sequences
from this list, form a $0$ set.  In the simulation game we define
below, we make use of this observation to discard such runs.

A simulation game can only use the syntactic material of the
automata----neither the MDP nor the strategy are available.  The
information the spoiler may provide cannot explicitly refer to them.
What the spoiler may be asked to provide is information on when she
has entered an accepting LSCC, and, once she has signaled this, which
sequences of length $l$ of \emph{automata} transitions of $\Bb$ occur
in the LSCC.  The sequences of automata transitions are simply the
projections on the automata transitions from the sequences of
transitions of length $\ell$ that occur in the LSCC $L$.  We call this
information a \emph{gold-brim accepting end-component claim} of length
$\ell$, $\ell$-GAEC claim for short.

The term ``gold-brim'' in the definition indicates that this is a powerful approach, but not one that can be implemented efficiently.
We will define weaker, efficiently implementable notions of accepting end-component claims (AEC claims) later.

The AEC simulation game is very similar to the simulation game of
Section~\ref{ssec:simulate}.  Both players produce an infinite run of
their respective automata.  If the spoiler makes an AEC claim, e.g., an $\ell$-GAEC claim, we say
that her run \emph{complies} with it if, starting with the transition when the
AEC claim is made, all states, transitions, or sequences of
transitions in the claim appear infinitely often, and all states,
transitions, and sequences of transitions the claim excludes do not
appear.
For an $\ell$-GAEC claim, this means that all of the sequences of
transitions of length $\ell$ in the claim occur infinitely often, and
no other sequence of length $\ell$ occurs henceforth.

Thus, like a classic simulation game, an $\ell$-GAEC simulation game is started by the spoiler, who places her pebble on an initial state of $\cal B$.
Next, the duplicator puts his pebble on an initial state of $\cal A$.
The two players then take turns, always starting with the spoiler choosing an input letter and an according transition from $\cal B$,
followed by the duplicator choosing a transition for the same letter in $\cal A$.

Different from the classic simulation game, in an $\ell$-GAEC
simulation game, the spoiler has an additional move that she can (and,
in order to win, has to) perform once in the game: In addition to
choosing a letter and a transition, 
she can claim that she has reached an accepting end-component, and provide a complete list of sequences of automata transitions of length $\ell$ that
can henceforth occur.
This store is maintained, and never updated. It has no further effect on the rules of the game:
Both players produce an infinite run of their respective automata.
The duplicator has four ways to win: 
\begin{enumerate}
 \item if the spoiler never makes an AEC claim,
 \item if the run of $\Aa$ he constructs is accepting,
 \item if the run the spoiler constructs on $\Bb$ does not comply with the AEC claim, and
 \item if the run that the spoiler produces is not accepting.
\end{enumerate}
For $\ell$-GAEC claims, (4) simply means that the set of transitions
defined by the sequences does not satisfy the B\"uchi, parity, or
Rabin acceptance condition.

\begin{theorem}\label{theo:GAEC}
[$\ell$-GAEC Simulation]
If $\cal A$ and $\cal B$ are language equivalent automata, $\cal B$ is
GFM, and there exists an $\ell$ such that $\cal A$ $\ell$-GAEC
simulates $\cal B$, then $\cal A$ is GFM.
\end{theorem}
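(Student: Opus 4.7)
The plan is to reduce to Lemma~\ref{lem:simple}(4) by using the AEC claim to bridge the gap between $\ell$-GAEC simulation and classical simulation. Fix an MDP $\Mm$ with initial state $s_0$. Since $\PSat \leq \PSemSat$ always holds, and language equivalence with the GFM automaton $\Bb$ gives $\PSemSat^\Mm_\Aa(s_0) = \PSemSat^\Mm_\Bb(s_0) = \PSat^\Mm_\Bb(s_0)$, it suffices to construct, from a strategy $\mu^\times$ on $\Mm \times \Bb$ realising $\PSat^\Mm_\Bb(s_0)$, a strategy $\nu^\times$ on $\Mm \times \Aa$ with $\PSat^\Mm_\Aa(s_0, \nu^\times) \geq \PSat^\Mm_\Bb(s_0, \mu^\times)$. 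By Theorem~\ref{th:end-comp}, I may take $\mu^\times$ positional and pure, first reaching a union of disjoint AECs and then almost surely satisfying $\Gamma^\times$ infinitely often inside each.

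Second, I would analyse the finite Markov chain that $\mu^\times$ induces on $\Mm \times \Bb$. Let $L_1,\ldots,L_k$ be its accepting leaf SCCs, and for each $L_i$ let $C_i$ be the set of length-$\ell$ sequences of $\Bb$-transitions obtained by projecting the transitions inside $L_i$ onto $\Bb$. Fix a duplicator's winning strategy $\tau$ for the $\ell$-GAEC game. I would define $\nu^\times$ by running $\tau$ against a \emph{virtual spoiler} constructed alongside the real play: maintain a virtual $\Bb$-state updated by $\mu^\times$ from the shared $\Mm$-history; choose the $\Mm$-action as $\mu^\times$ dictates from the virtual $\Bb$-state, and take the $\Aa$-transition to be $\tau$'s response. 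The virtual spoiler fires her single AEC move the first time her virtual state enters some $L_i$, declaring the claim $C_i$.

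Third, I would partition the success mass of $\mu^\times$ by the events $E_i$ that the Markov chain reaches $L_i$, so that $\sum_i \Pr(E_i) = \PSat^\Mm_\Bb(s_0,\mu^\times)$. By Theorem~\ref{th:end-comp} and standard facts about leaf SCCs, almost surely on $E_i$ the virtual $\Bb$-run (a) enters $L_i$ and triggers the claim $C_i$, (b) henceforth visits exactly the length-$\ell$ $\Bb$-transition sequences in $C_i$, each infinitely often, and (c) is $\Bb$-accepting. Winning conditions (1), (3), and (4) are thus all ruled out almost surely on $E_i$, forcing $\tau$ to win via (2); that is, the $\Aa$-run produced by $\nu^\times$ is accepting. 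Summing over the disjoint $E_i$ yields $\PSat^\Mm_\Aa(s_0,\nu^\times)\geq \sum_i\Pr(E_i)=\PSat^\Mm_\Bb(s_0,\mu^\times)$, and hence GFM of $\Aa$.

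The main obstacle is verifying the compliance step: showing that the tail of the virtual $\Bb$-run almost surely exhibits exactly the sequences in $C_i$, no more and no less. The ``no more'' direction is immediate since, once in $L_i$, the Markov chain cannot leave a leaf SCC; the ``no less'' direction needs Theorem~\ref{th:end-comp} applied inside the AEC containing $L_i$, so that every transition of $L_i$ is visited infinitely often under $\mu^\times$ and therefore every length-$\ell$ sequence supported by $L_i$ occurs infinitely often. Care is also needed to check that the virtual spoiler's strategy, though not itself a game-theoretic spoiler strategy, is a well-defined history function of the play in $\Mm \times \Aa$, so that $\nu^\times$ is a bona fide MDP strategy.
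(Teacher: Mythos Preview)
Your proposal is correct and follows essentially the same approach as the paper: fix an optimal positional pure strategy $\mu^\times$ for $\Mm \times \Bb$, identify the accepting leaf SCCs of the resulting Markov chain together with their length-$\ell$ $\Bb$-transition profiles, and compose $\mu^\times$ with the duplicator's winning strategy $\tau$ via a ``virtual spoiler'' who fires the AEC claim upon entering an accepting LSCC---this is exactly the paper's $\tau \circ \mu$ construction, and your compliance analysis (``no more''/``no less'') and well-definedness check correspond to what the paper handles more tersely (the latter in a footnote). The only quibble is your opening sentence: you do not in fact reduce to Lemma~\ref{lem:simple}(4), since you never establish classical simulation---your argument is direct, just like the paper's.
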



For the proof, we use an arbitrary (but fixed) MDP $\cal M$, and an arbitrary (but fixed) pure optimal positional strategy $\f$ for
$\cal M \times \cal B$, resulting in the Markov chain $(\cal M \times \cal B)_\f$.
We assume w.l.o.g.\ that the accepting LSCCs in $(\cal M \times \cal B)_\f$ are identified, e.g., by a bit.

Let $\tau$ be a winning strategy of the duplicator
in an $\ell$-GAEC simulation game.  Abusing notation, we let
$\tau \circ \f$ denote the finite-memory strategy%
\footnote{The strategy $\tau$ consists of one sub-strategy to be used
  before the AEC claim is made and one sub-strategy for each possible
  $\ell$-GAEC claim.  The memory of $\tau \circ \f$ tracks the
  position in $(\Mm \times \Bb)_\f$.  When an accepting LSCC is
  detected (via the marker bit) analysis of $(\Mm \times \Bb)_\f$
  reveals the only possible $\ell$-GAEC claim.  This claim is used to
  select the right entry from $\tau$.}  obtained from $\f$ and $\tau$
for $\cal M \times \cal A$, where $\tau$ is acting only on the
automata part of $(\cal M \times \cal B)$, and where the spoiler makes
the move to the end-component when she is in some LSCC $B$ of
$(\cal M \times \cal B)_\f$ and 
gives the full list of sequences of transitions of length $\ell$ that
occur in $B$.

\begin{proof}
As $\cal B$ is good for MDPs, we only have to show that the chance of winning in
$(\cal M \times \cal A)_{\tau \circ \f}$ is at least the chance of winning in
$(\cal M \times \cal B)_\f$. 
The chance of winning in $(\cal M \times \cal B)_\f$ is the chance of reaching
an accepting LSCC in $(\cal M \times \cal B)_\f$. 
It is also the chance of reaching an accepting LSCC $L \in (\cal M \times \cal
B)_\f$ \emph{and}, after reaching $L$, to see exactly the sequences of
transitions of length $\ell$ that occur in $L$, and to see all of them infinitely
often.  

By construction, $\tau \circ \f$ will translate those runs into accepting runs
of $(\cal M \times \cal A)_{\tau \circ \f}$, such that the chance of an
accepting run of $(\cal M \times \cal A)_{\tau \circ \f}$ is at least the chance
of an accepting run of $(\cal M \times \cal B)_\f$.  As $\f$ is optimal, the
chance of winning in  $\cal M \times \cal A$ is at least the chance of winning
in $\cal M \times \cal B$.
As $\cal B$ is GFM, this is the chance of $\cal M$ producing a run
accepted by $\cal B$ (and thus $\cal A)$ when controlled optimally, which is an
upper bound on the chance of winning in  $\cal M \times \cal A$. \qed
\end{proof}




An $\ell$-GAEC simulation, especially for large $\ell$, results in
very large state spaces, because the spoiler has to
list all sequences of transitions of $\Bb$ of length $\ell$
that will appear infinitely often.  No other sequence of
length $\ell$ may then appear in the run%
\footnote{The AEC claim provides information about the
  accepting LSCC in the product under the chosen pure
  positional strategy.  When the AEC claim requires the exclusion of
  states, transitions, or sequences of transitions, then they are
  therefore surely excluded, whereas when it requires inclusion of,
  and thus inclusion of infinitely many occurrances of, states,
  trasitions, or sequences of transitions, then they (only) occur almost
  surely infinitely often. Yet, runs that do not contain them all
  infinitely often form a zero set, and can thus be ignored.}.
This can, of course, be prohibitively expensive.

As a compromise, one can use coarser-grained information at the cost
of reducing the duplicator's ability of winning the game. E.g., the
spoiler could  be asked to only reveal a transition that is
repeated infinitely often, plus (when using more powerful acceptance conditions than B\"uchi),
some acceptance information, say the dominating priority in a parity game or a winning Rabin pair.
This type of coarse-grained claim can be refined slightly by allowing
the \emph{duplicator} to change at any time the transition that is to
appear infinitely often to the transition just used by the spoiler.
%
%
Generally, we say that an AEC simulation game is any simulation game, where
\begin{itemize}
 \item the spoiler provides a list of states, transitions, or sequences of transitions that will occur infinitely often and
       a list of states, transitions, or sequences of transitions that will not occur in the future when making her AEC claim, and
 \item the duplicator may be able to update this list based on his observations,
 \item there exists some $\ell$-GAEC simulation game such that a
   winning strategy of the spoiler translates into a winning strategy
   of the spoiler in the AEC simulation game.
\end{itemize}
The requirement that a winning spoiler strategy translates into a winning spoiler strategy in an $\ell$-GAEC game entails that AEC
simulation games can prove the GFM property.

\begin{corollary}\label{cor:AECsim}
[AEC Simulation]
If $\cal A$ and $\cal B$ are language equivalent automata, $\cal B$ is good for MDPs, and
$\cal A$ AEC-simulates $\cal B$, then $\cal A$ is good for MDPs.
\end{corollary}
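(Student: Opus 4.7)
The plan is to reduce Corollary~\ref{cor:AECsim} to Theorem~\ref{theo:GAEC} using the third clause of the definition of an AEC simulation game. That clause guarantees, for any AEC simulation game between $\cal A$ and $\cal B$, the existence of some $\ell$ such that every winning spoiler strategy in the $\ell$-GAEC simulation game translates into a winning spoiler strategy in the AEC simulation game. The corollary is therefore essentially a matter of unwinding this condition contrapositively, so that AEC simulation forces $\ell$-GAEC simulation for the $\ell$ singled out by the definition.

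Concretely, I would first fix the $\ell$ provided by that third clause. Assuming $\cal A$ AEC-simulates $\cal B$ means the duplicator has a winning strategy in the AEC game, so the spoiler has none. By the translation property, the spoiler can have no winning strategy in the $\ell$-GAEC game either: any such strategy would immediately yield a winning spoiler strategy in the AEC game. Invoking determinacy of the $\ell$-GAEC game then produces a winning duplicator strategy, so $\cal A$ $\ell$-GAEC simulates $\cal B$. With language equivalence and the GFM property of $\cal B$ already in hand, Theorem~\ref{theo:GAEC} then yields that $\cal A$ is GFM.

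The one step that really deserves care is the invocation of determinacy. The $\ell$-GAEC simulation game is played on a finite arena---the synchronous product of $\cal A$ and $\cal B$, augmented by a single bit that tracks whether the AEC claim has been made and, once made, by the claim itself drawn from a finite set of possibilities plus whatever finite memory is needed to evaluate compliance---and its winning condition is a Boolean combination of $\omega$-regular conditions on the two run projections (acceptance of $\cal A$'s run, non-acceptance or non-compliance of $\cal B$'s run). Such games are determined, but I would want to spell out the arena and the winning condition explicitly so that Martin's theorem (or the more specialized Borel determinacy for finite-arena $\omega$-regular games) applies without ambiguity. This bookkeeping, rather than anything about MDPs, is the main obstacle; once determinacy is in place, the corollary follows from a single appeal to Theorem~\ref{theo:GAEC}.
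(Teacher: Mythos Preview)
Your proposal is correct and follows exactly the line the paper intends: the corollary is stated immediately after the remark that the third clause in the definition of an AEC simulation game ``entails that AEC simulation games can prove the GFM property,'' and your contrapositive-plus-determinacy argument is precisely how that entailment is unpacked, reducing to Theorem~\ref{theo:GAEC}. The paper leaves determinacy implicit, whereas you spell it out; your observation that the $\ell$-GAEC game lives on a finite arena with an $\omega$-regular winning condition is the right justification and needs no appeal to full Borel determinacy.
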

Of course, for every AEC simulation, one first has to prove that winning strategies for the spoiler translate.
We have used two simple variations of the AEC simulation games:\\[-2mm]

\noindent\textbf{accepting transition:}
the spoiler may only make her AEC claim when taking an accepting transition; this transition---and no other information---is stored, and
the spoiler commits to---and commits only to---seeing this transition
infinitely often;
\\[2mm] 
\noindent\textbf{accepting transition with update:}
different to the \emph{accepting transition} AEC simulation game, the duplicator can---but does not have to---update the stored accepting transition
whenever the spoiler passes by an accepting transition. 

\begin{theorem}
  \label{theo:AEC}
  Both, the \emph{accepted transition} and the \emph{accepted transition with update} AEC simulation, can be used to establish the good for MDPs property.
\end{theorem}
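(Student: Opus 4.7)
The plan is to mirror the proof of Theorem~\ref{theo:GAEC} and give a direct argument for each of the two variants, rather than first verifying the abstract conditions of an AEC simulation game and invoking Corollary~\ref{cor:AECsim}. This is cleaner because translating a winning $\ell$-GAEC spoiler strategy into a winning spoiler strategy in either of the two lightweight games is hampered by the fact that the duplicator's responses may differ under the two information regimes, whereas the direct probabilistic argument only uses the shape of the AEC claim.

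First, I would fix an arbitrary MDP $\Mm$ with initial state $s_0$. By Theorem~\ref{th:end-comp}, $\Mm \times \Bb$ admits an optimal pure positional strategy $\f$ that realises $\PSat^\Mm_{\Bb}(s_0) = \PSemSat^\Mm_{\Bb}(s_0) = \PSemSat^\Mm_{\Aa}(s_0)$; the second equality uses that $\Bb$ is GFM and the third uses $\cal L(\Aa)=\cal L(\Bb)$. Let $\tau$ be the duplicator's winning strategy in the AEC simulation game under consideration. I would then define a finite-memory strategy $\tau \circ \f$ on $\Mm \times \Aa$ that (i) plays $\f$'s action on the MDP component, (ii) tracks the current $\Bb$-state in its memory, (iii) feeds the resulting $\Bb$-transitions to $\tau$ as ``spoiler moves'' and plays $\tau$'s reply on the $\Aa$-component, and (iv) has the simulated spoiler perform the AEC claim at the first accepting transition taken after entering an accepting LSCC $L$ of the Markov chain $(\Mm \times \Bb)_\f$. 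In the \emph{accepting transition} variant, that transition is the permanent claim; in the \emph{with update} variant, $\tau$ may later replace the stored transition by any accepting transition the spoiler subsequently takes.

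The decisive probabilistic step again invokes Theorem~\ref{th:end-comp}: once a run of $(\Mm \times \Bb)_\f$ enters an accepting LSCC $L$, almost surely every transition of $L$ occurs infinitely often. Any transition ever stored, in either variant, is an accepting transition that the spoiler has just taken, and therefore belongs to $L$. Consequently compliance holds almost surely---the currently stored transition recurs, no matter how often $\tau$ updates it. Whenever the $\Bb$-side reaches an accepting LSCC, the simulated spoiler's run is therefore accepting and complies with the claim, which, by $\tau$ being winning, forces the $\Aa$-side run to be accepting. Integrating over runs gives $\PSat^\Mm_{\Aa}(s_0) \geq \PSat^\Mm_{\Bb}(s_0) = \PSemSat^\Mm_{\Aa}(s_0)$; since $\PSat \leq \PSemSat$ holds in general, $\Aa$ is GFM.

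The main obstacle is the update variant, where one has to pin down a precise semantics of compliance in the presence of updates and confirm that the LSCC argument still goes through. The natural reading---at every moment the then-stored transition must occur at some later moment---suffices, because every stored transition lies in $L$ and almost-sure LSCC recurrence delivers it. A careful bookkeeping of how $\tau$'s internal updates interleave with the spoiler's moves is the only real technicality; everything else is a direct reuse of the Theorem~\ref{theo:GAEC} construction, now with a one-transition claim in place of the exhaustive $\ell$-GAEC list.
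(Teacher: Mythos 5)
Your proposal is correct, but it takes a genuinely different route from the paper. The paper proves the theorem purely game-theoretically: it shows that a winning \emph{spoiler} strategy in the $1$-GAEC simulation game translates into a winning spoiler strategy in each of the two lightweight games (the spoiler plays her $1$-GAEC strategy and announces, as her one-transition claim, the first accepting transition on or after her $1$-GAEC claim; any play complying with the $1$-GAEC claim then complies with the weaker claim). Combined with determinacy, this means that if the duplicator wins the lightweight game he also wins the $1$-GAEC game, and Corollary~\ref{cor:AECsim} / Theorem~\ref{theo:GAEC} finish the job. You instead inline the probabilistic argument of Theorem~\ref{theo:GAEC}: you build $\tau \circ \f$ directly from a winning duplicator strategy $\tau$ of the \emph{lightweight} game and observe that every transition ever stored is an accepting transition of the accepting LSCC $L$, hence recurs almost surely, so compliance and acceptance of the spoiler's run hold almost surely and $\tau$ forces the $\Aa$-run to accept. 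Both arguments rest on the same recurrence property of LSCCs under pure positional strategies; yours avoids determinacy and the spoiler-strategy translation at the cost of redoing the measure-theoretic bookkeeping, while the paper's is shorter given its existing infrastructure. One remark: the obstacle you cite for the paper's route is not actually a problem --- a winning spoiler strategy in the $1$-GAEC game defeats \emph{every} duplicator play, in particular every play a less-informed duplicator can produce, so the translation goes through; your caution is understandable but the direct argument is not forced. Also, a minor phrasing point: Theorem~\ref{th:end-comp} only yields the optimal pure positional strategy realising $\PSat^\Mm_{\Bb}(s_0)$; the equalities with $\PSemSat^\Mm_{\Bb}(s_0)$ and $\PSemSat^\Mm_{\Aa}(s_0)$ come from GFM-ness of $\Bb$ and language equivalence, as you in fact note.
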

To show this, we describe the strategy translations in accordance with Corollary~\ref{cor:AECsim}.


\begin{proof}
In both cases, the translation of a winning strategy of the
spoiler for the $1$-GAEC simulation game are straightforward: The
spoiler essentially follows her winning strategy from the $1$-GAEC
simulation game, with the extra rule that she will make her AEC claim
to the duplicator on the first accepting transition on or after her
AEC claim in the $1$-GAEC claim.  If the duplicator is allowed to
update the transition, this information is ignored by the
spoiler---she plays according to her winning strategy from the
$1$-GAEC simulation game.
Naturally, the resulting play will comply with her $1$-GAEC claim, and
will thus also be winning for the---weaker---AEC claim made to the
duplicator. \qed
\end{proof}

\begin{figure}[t]
  \begin{minipage}{0.59\textwidth}
    \centering
     \begin{tikzpicture}
        \node[state,initial below,fill=safecellcolor] (I) {$a_0$};
        \node[state,fill=safecellcolor] (A) [left=2cm of I] {$a_1$};
        \node[state,fill=safecellcolor] (B) [right=2cm of I] {$a_2$};
        \node (name) [below right=0.7cm and 0.4cm of I] {$\mathcal{A}$};
        \path[->]
        (I) edge [loop above] node {$\mathtt{a,b,c}$} ();
        \path[->]
        (I) edge              node[above] {$\mathtt{a,b,c}$} (A);
        \path[->]
        (I) edge              node {$\mathtt{a,b,c}$} (B);
        \path[->]
        (A) edge [loop above] node {$\mathtt{a,b,c}$} ();
        \path[->]
        (A) edge [loop left] node[accepting dot,label={[label distance=2mm]$\mathtt{a}$}] {} ();
        \path[->]
        (B) edge [loop above] node {$\mathtt{a,b,c}$} ();
        \path[->]
		(B) edge [loop right] node[accepting dot,label={[label distance=1mm]$\mathtt{b}$}] {} ();
     \end{tikzpicture}
  \end{minipage}
  \begin{minipage}{0.39\textwidth}
     \centering
     \begin{tikzpicture}
		\node[state,initial,initial text=$\mathcal{B}$,fill=safecellcolor] (I) {$b_0$};
		\node[state,fill=safecellcolor] (AB) [right=2cm of I] {$b_1$};
		\path[->, bend right]
		(I) edge 			  node[accepting dot,label={below:$\mathtt{a,b}$}] {} (AB);
		\path[->]
		(I) edge [loop above] node {$\mathtt{c}$} ();
		\path[->, bend right]
		(AB) edge 			  node [accepting dot,label={$\mathtt{a,b,c}$}] {} (I);
	 \end{tikzpicture}
  \end{minipage}
  \caption{Automata $\cal A$ (left) and $\cal B$ (right) for
    $\varphi = (\always \eventually \mathtt{a}) \vee (\always\eventually 
    \mathtt{b})$. The dotted transitions are accepting.
    The NBA $\cal A$ does not simulate the DBA $\cal B$:
    $\mathcal{B}$ can play $a$'s until $\mathcal{A}$ moves to either the state on the left, or the state on the right.
    $\mathcal{B}$ then wins by henceforth playing only $b$'s or only $a$'s.
    However, $\mathcal{A}$ is good for MDPs.
    It wins the AEC simulation game by waiting until an AEC is reached (by $\cal B$), and then
    check if $a$ or $b$ occurs infinitely often in this AEC.
    Based on this knowledge, $\cal A$ can make its decision.
    This can be shown by AEC simulation if $\cal B$ has to provide sufficient information, such as
    a list of transitions---or even a list of letters---that occur infinitely often.
    The amount of information the spoiler has to provide determines the strength of the AEC simulation used.
    If, e.g., $\cal B$ only has to reveal one accepting transition of the end-component,
    then it can select an end-component where the revealed transition
    is $(b_1,c,b_0)$, which does not provide sufficient information.
    Whereas, if the duplicator is allowed to update the transition, then the duplicator wins by updating the recorded transition
    to the next $a$ or $b$ transition}
  \label{fig:AECexample}
\end{figure}


We use AEC simulation to identify GFM automata among the automata produced (e.g., by SPOT \cite{Duret-LutzLFMRX16}) at the beginning of the transformation.
Figure \ref{fig:AECexample} shows an example for which the duplicator wins the AEC simulation game,
but loses the ordinary simulation game.
Candidates for automata to simulate are, e.g., the slim GFM B\"uchi automata and the limit deterministic B\"uchi automata discussed above.

\section{Evaluation}
\label{sec:experiments}
\subsection{General B\"uchi Automata for Probabilistic Model Checking}
\newcommand{\statstimeout}{69\xspace}
\newcommand{\statsdet}{315\xspace}
\newcommand{\statssimzero}{103\xspace}
\newcommand{\statssimone}{11\xspace}
\newcommand{\statssimtwo}{1\xspace}
\newcommand{\statsnosim}{501\xspace}
\newcommand{\statsavgspotbuchistates}{15.21\xspace}
\newcommand{\statsavgowlbuchistates}{46.35\xspace}
\newcommand{\statstotalnosimshown}{570\xspace}
\newcommand{\statsnumformulae}{1000\xspace}
\newcommand{\statstotalnumnondet}{685\xspace}

\newcommand{\ffalse}{\mathit{ff}}
\newcommand{\ttrue}{\mathit{tt}}
\newcommand{\X}{\mathsf{X}}
\newcommand{\U}{\mathbin{\mathsf{U}}}
\newcommand{\F}{\mathsf{F}}
\newcommand{\G}{\mathsf{G}}
\newcommand{\R}{\mathsf{R}}
\newcommand{\W}{\mathsf{W}}
\newcommand{\M}{\mathsf{M}}
\newcommand{\limplies}{\mathsf{\rightarrow}}
\newcommand{\liff}{\mathsf{\leftrightarrow}}
\newcommand{\lxor}{\mathsf{xor}}

As discussed, automata that simulate slim automata or SLDBAs are good for MDPs.
This fact can be used to allow B\"uchi automata produced from general-purpose tools such as SPOT's~\cite{Duret-LutzLFMRX16} ltl2tgba rather than using specialized automata types.
Automata produced by such tools are often smaller because such general-purpose tools are highly optimized and not restricted to producing slim or limit deterministic automata.
Thus, one produces an arbitrary B\"uchi automaton using any available method, then transforms this automaton into a slim or limit deterministic automaton, and finally checks whether the original automaton simulates the generated one.

We have evaluated this idea on random LTL formulas produced
by SPOT's tool randltl.
We have set the tree size, which influences the size of the formulas, to 50, and have produced \statsnumformulae formulas with 4 atomic propositions each.
We left the other values to their defaults.
We have then used SPOT's ltl2tgba (version 2.7) to turn these formulas into non-generalized B\"uchi automata using default options.
Finally, for each automaton, we have used our tool to check whether the automaton simulates a limit deterministic automaton that we produce from this automaton.
For comparison, we have also used Owl's \cite{Sicker16b} tool ltl2ldba (version 19.06.03) to compute limit deterministic non-generalized Buchi automata.
We have also used the option of this tool to compute B\"uchi automata with a nondeterministic initial part.
We used 10 minute timeouts.

Of these \statsnumformulae formulas, \statsdet can be transformed to deterministic B\"uchi automata.
For an additional \statssimzero other automata generated, standard simulation sufficed to show that they are GFM.
For a further \statssimone of them, the simplest AEC simulation (the spoiler chooses an accepting transition to occur infinitely often) sufficed, and
another \statssimtwo could be classed GFM by allowing the duplicator to update the transition.
\statsnosim automata turned out to be nonsimulatable and for \statstimeout we did not get a decision due to a timeout. 

For the LTL formulas for which ltl2tgba could not produce deterministic automata, but for which simulation could be shown,
the number of states in the generated automata was often lower than the number of states in the automata produced by Owl's tools.
On average, the number of states per automaton was $\approx$\statsavgspotbuchistates for SPOT's ltl2tgba;
while for Owl's ltl2ldba it was $\approx$\statsavgowlbuchistates.
%
We provide examples for each outcome in Table~\ref{tab:simulations}.

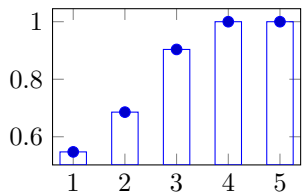
\begin{wrapfigure}[9]{l}{0.33\textwidth}
  \begin{tikzpicture}
\begin{axis}[xtick=data,width=0.4\textwidth,height=0.3\textwidth]
\addplot+[ybar,xtick={1,2,...,10}] plot coordinates
	{(1, 0.5473988842293195) (2, 0.6856633562515916) (3, 0.9035970090515545) (4, 1.0) (5, 1.0) };
\end{axis}
\end{tikzpicture}
  \vspace{-7mm}
  \caption{Deciles ratio ltl2tgba\newline/semi-deterministic automata\label{fig:rltl-decile}}
\end{wrapfigure}
\noindent Let us consider the ratio between the size of automata produced by ltl2tgba and the size of semi-deterministic automata produced by Owl.
The average of this number for all automata that are not deterministic and that can be simulated in some way is $\approx{}1.0335$.
This means that on average, for these automata, the semi-deterministic automata are slightly smaller.
If we take a look at the first $5$ deciles depicted in Fig.~\ref{fig:rltl-decile}, we see that there is a large number of formulas for which ltl2tgba and Owl produce automata of the same size.
For around $24.3478\%$ of the cases, automata by SPOT are smaller than those produced by Owl (ratio $< 1$).

\renewcommand\arraystretch{0.8}
{\footnotesize 
  \begin{table}[t]
  \centering
  \caption{Simulation results. Column ``LTL'' contains the formula the automata are generated from.
  	Column ``sim'' contains the type of simulation required to show that the original automaton is suitable for MDPs.
  	There, ``det'' means that the automaton is deterministic (no simulation required), ``sim0'' means that standard simulation sufficed, ``sim1'' means that the simulation where the spoiler is forced to choose a transition to be repeated infinitely often suffices, ``sim2'' means that in addition the witness can change the transition which has to occur infinitely often, and ``nosim'' means that we were not able to prove a simulation relation.
  	``ltl2tgba'' contains the number of states from SPOT's ltl2tgba, ``owl'' the ones from Owl's ltl2ldba
  }
  \label{tab:simulations}
  \begin{tabular}{c|c|c|c}
ltl & sim & ltl2tgba & owl \\\hline
\begin{minipage}{8cm}\begin{align*}\begin{autobreak}(\X
\F
(\mathit{ap}_{3}
\limplies
(\lnot
\mathit{ap}_{0}
\lor
\lnot
\G
\lnot
\mathit{ap}_{2}))
\land
\lnot
((\mathit{ap}_{1}
\limplies
\mathit{ap}_{3})
\R
\G
\mathit{ap}_{1}))
\lor
((\F
\mathit{ap}_{2}
\W
\mathit{ap}_{1})
\land
(\G
\mathit{ap}_{3}
\M
\mathit{ap}_{3}))\end{autobreak}\end{align*}\end{minipage} & det & 7 & 10 \\
\begin{minipage}{8cm}\begin{align*}\begin{autobreak}\G
\F
\G
(\lnot
\mathit{ap}_{3}
\lor
\X
\X
\X
(\mathit{ap}_{0}
\U
(\ffalse
\R
\mathit{ap}_{2})))
\R
\F
\G
\mathit{ap}_{1}\end{autobreak}\end{align*}\end{minipage} & sim0 & 2 & 2 \\
\begin{minipage}{8cm}\begin{align*}\begin{autobreak}\lnot
\G
\F
\lnot
(\X
(\mathit{ap}_{1}
\liff
(\ttrue
\U
\X
\mathit{ap}_{2}))
\lxor
\X
(\ffalse
\R
(\lnot
\mathit{ap}_{0}
\liff
\G
\X
\mathit{ap}_{3})))\end{autobreak}\end{align*}\end{minipage} & sim1 & 26 & 14 \\
\begin{minipage}{8cm}\begin{align*}\begin{autobreak}\F
\G
(\lnot
(\F
\mathit{ap}_{2}
\R
(\mathit{ap}_{1}
\land
\G
\mathit{ap}_{0}))
\limplies
(\mathit{ap}_{3}
\lxor
\G
\mathit{ap}_{1}))
\liff
(\G
\mathit{ap}_{2}
\land
(\F
\X
\mathit{ap}_{0}
\U
\lnot
(\mathit{ap}_{2}
\land
\lnot
\mathit{ap}_{2})))\end{autobreak}\end{align*}\end{minipage} & sim2 & 24 & 41 \\
\begin{minipage}{8cm}\begin{align*}\begin{autobreak}\lnot
(((\mathit{ap}_{3}
\limplies
(\mathit{ap}_{2}
\liff
\mathit{ap}_{3}))
\M
\mathit{ap}_{1})
\R
\mathit{ap}_{3})
\W
\lnot
\F
\G
\mathit{ap}_{1}\end{autobreak}\end{align*}\end{minipage} & nosim & 5 & 10 \\
\end{tabular}

\end{table}
}


\subsection{GFM Automata and Reinforcement Learning}
\label{sec:gfm-rl}

SLDBAs have been used in \cite{Hahn19} for model-free reinforcement
learning of $\omega$-regular objectives.  While the B\"uchi acceptance
condition allows for a faithful translation of the objective to a scalar
reward, the agent has to learn how to control the automaton's
nondeterministic choices; that is, the agent has to learn when the
SLDBA should cross from the initial component to the accepting
component to produce a successful run of a behavior that satisfies
the given objective.

Any GFM automaton with a B\"uchi acceptance condition can be used
instead of an SLDBA in the approach of \cite{Hahn19}.  While in many
cases SLDBAs work well,
(see, for example, results for randomly generated problems in Table~\ref{tab:simulations})
GFM automata that are not limit-deterministic may provide a
significant advantage.

Early during training, the agent relies on uniform random choices to
discover policies that lead to successful episodes.  This includes
randomly resolving the automaton nondeterminism.  If random choices
are unlikely to produce successful runs of the automaton in case of
behaviors that should be accepted, learning is hampered because good
behaviors are not rewarded.  Therefore, GFM automata that are more
likely to accept under random choices will result in the agent
learning more quickly.
We have found the following properties of GFM automata to affect
the agent's learning ability.

\noindent\textbf{Low branching degree.}
A low branching degree presents the agent with fewer alternatives,
reducing the expected number of trials before the agent finds a good
combination of choices.  Consider an MDP and an automaton that
require a specific sequence of $k$ nondeterministic choices
in order for the automaton to accept.  If at each choice there
are $b$ equiprobable options, the correct sequence is obtained with
probability $b^{-k}$.


This is a simplified description of what happens in the \texttt{milk}
example, described in Appendix~\ref{sec:rl-exp}.  In this example, the
agent learns in fewer episodes with a slim automaton than with other
GFM automata with higher branching degrees.

\noindent\textbf{Cautiousness.}  An automaton that enables fewer
nondeterministic choices for the same finite input word gives the
agent fewer chances to choose wrong.  The slim automata construction
has the interesting property of ``collecting hints of acceptance''
before a nondeterministic choice is enabled because $S'$ has to be
nonempty for a $\gamma_{2,1}$ transition to be present and that
requires going through at least one accepting transition.

Consider a model with the objective $\eventually\always p$, in which
$p$ changes value at each transition many times before a winning end
component may be reached.  An SLDBA for $\eventually\always p$ that
may move to the accepting part whenever it reads $p$ is very likely to
jump prematurely and reject.  However, an automaton produced by the
breakpoint construction must see $p$ twice in a row before the jump is
enabled, preventing the agent from making mistakes in its control.
(See example \texttt{oddChocolates} in Appendix~\ref{sec:rl-exp}.)

\noindent\textbf{Forgiveness.} Mistakes made in resolving nondeterminism may
be irrecoverable.  This is often true of SLDBAs meant for
model checking, in which jumps are made to select a subformula to be
eventually satisfied.  However, general GFM automata, thanks
also to their less constrained structure, may be constructed to
``forgive mistakes'' by giving more chances of picking a successful
run.

Figure~\ref{fig:forgiveness} compares a typical SLDBA to an automaton
that is not limit-deterministic and is not produced by the breakpoint
construction, but is proved GFM by AEC simulation.  This latter
automaton has a nondeterministic choice in state $q_0$ on letter
$x \wedge \neg y$ that can be made an unbounded number of times.  The
agent may choose $q_1$ repeatedly even if eventually
$\eventually\always x$ is false and $\always\eventually y$ is true.
With the SLDBA, on the other hand, there is no room for error.
A model where the added flexibility improves learning is
\texttt{forgiveness} in Appendix~\ref{sec:rl-exp}.

Note that for the automata from Figure \ref{fig:forgiveness}, the state $q_0$
from the forgiving automaton on the right simulates all three states
of the SLDBA on the left, $q_1$ simulates $a_1$ and $a_2$, and $q_2$
simulates $a_2$, but no state of the the SLDBA simulates any state
of the forgiving automaton.


\begin{figure}
  \centering
  \begin{tikzpicture}[
    every text node part/.style={align=center},
    every state/.style={fill=safecellcolor}]
    \begin{scope}
      \node[state,initial above] (Q0) at (0,0) {$a_0$};
      \node[state] (Q1) [right=1.75cm of Q0] {$a_1$};
      \node[state] (Q2) [left=1.75cm of Q0] {$a_2$};
      \path[->]
      (Q0) edge [loop below] node {$\top$} ()
      edge node [accepting dot,label={$x$}] {} (Q1)
      edge node [accepting dot,label={$y$}] {} (Q2)
      (Q1) edge [loop above] node [accepting dot,label={$x$}] {} ()
      (Q2) edge [loop above] node [accepting dot,label={$y$}] {} ()
      (Q2) edge [loop below] node {$\neg y$} ();
    \end{scope}
    \begin{scope}[xshift=3.75cm]
      \node[state,initial] (Q0) at (0,0) {$q_0$};
      \node[state] (Q1) [right=2.25 cm of Q0] {$q_1$};
      \node[state] (Q2) [right=2.25 cm of Q1] {$q_2$};
      \path[->]
      (Q0) edge [loop above] node [accepting dot,label={$y$}] {} ()
      edge [loop below] node {$\neg y$} ()
      edge [bend left] node [accepting dot,label={$x \wedge \neg y$}] {} (Q1)
      (Q1) edge [loop above] node [accepting dot,label={$x \wedge \neg
        y$}] {} ()
      edge [bend left] node [accepting dot, label={$y$}] {} (Q0)
      edge node {$\neg x \wedge \neg y$} (Q2)
      (Q2) edge [loop right] node {$\neg y$} ()
      edge [bend left=40] node [accepting dot, label={below:$y$}] {} (Q0);
    \end{scope}
  \end{tikzpicture}
  \caption{Two GFM automata for
    $(\eventually\always x) \vee (\always\eventually y)$.  SLDBA
    (left), and forgiving (right)}
  \label{fig:forgiveness}
\end{figure}


\noindent\textbf{A Case Study.}
We compared the effectiveness in learning to control a cart-pole
model of three automata for the property
$\bigl((\eventually\always x) \vee (\always\eventually y)\bigr) \wedge
\always \mathtt{safe}$.
The safety component of the objective is to keep the pole balanced and
the cart on the track.  The left two thirds of the track alternate
between $x$ and $y$ at each step.  The right third is always labeled
$y$, but in order to reach it, the cart has to cross a barrier, with
probability $1/3$ of failing.

The three automata are an SLDBA (4 states), a slim automaton (8
states), and a handcrafted forgiving automaton (4 states) similar to
the one of Fig.~\ref{fig:forgiveness}.

\begin{figure}[t]
  \centering
  \includegraphics[scale=0.65,trim=0cm 0.2cm 0cm 1cm,clip]{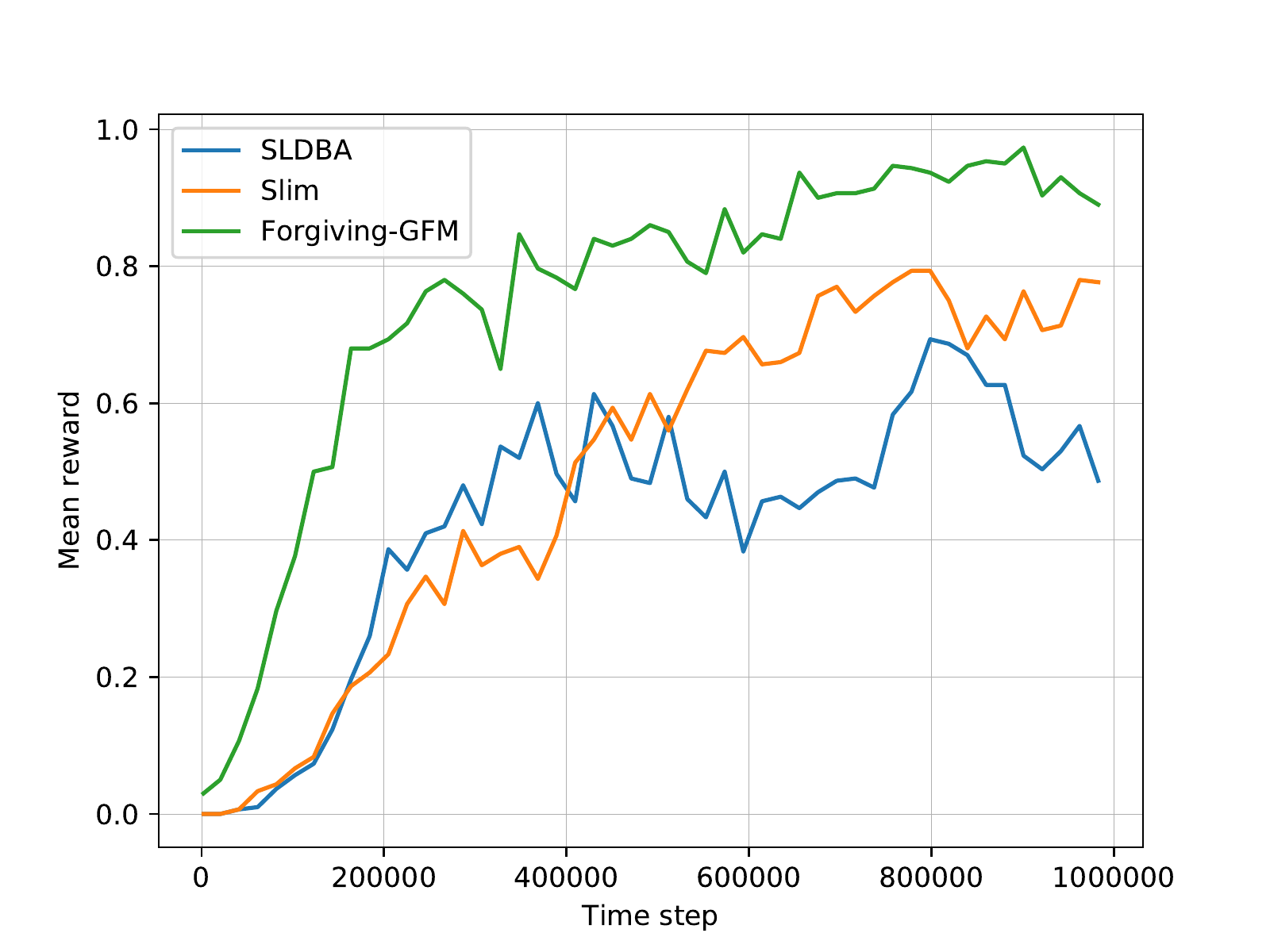}
  \caption{Learning curves}
  \label{fig:learn}
\end{figure}
Training of the continuous-statespace model employed Proximal Policy
Optimization (PPO) \cite{Schulm17}
as implemented in OpenAI Baselines \cite{baselines}. Most training 
parameters were left unmodified from their defaults. This includes using 
the Adam optimizer \cite{kingma2014adam} with a learning rate of 
$3\times10^{-4}$, a discount factor $\gamma = 0.99$, and $K = 4$ epochs 
of optimization per iteration. Each iteration consisted of 2048 steps 
sampled from 12 parallel environments. The clipping parameter was changed
to $\epsilon = 0.1$. We also modified the default network architecture 
to contain  two hidden layers with 128 neurons per layer. Each hidden 
layer was followed by a hyperbolic tangent activation. The training 
parameters remained the same for each automaton used in our experiment.

%
%
%
Figure~\ref{fig:learn} shows the learning curves for the three
automata averaged over three runs.  They underline the importance of
choosing the right automaton in RL.




\section{Conclusion}
\label{sec:conclusion}
We have defined the class of automata that are \emph{good for MDPs}---nondeterministic automata that can be used for the analysis of MDPs---and shown it to be closed under different simulation relations.
This has multiple favorable implications for model checking and reinforcement learning.
Closure under classic simulation opens a rich toolbox of statespace reduction techniques that come in handy to push the boundary of analysis techniques,
while the more powerful (and more expensive) AEC simulation has
promise to identify source automata that happen to be good for MDPs.
Using standard and (very basic) AEC simulation, we only failed to establish the good for MDPs property for the NBAs created for 
$\statstotalnosimshown$ of $\statstotalnumnondet$ random LTL examples that are not expressible as DBAs, $\statstimeout$ of them due to timeout.

The wider class of GFM automata also shows promise:
the slim automata we have defined to tame the branching degree while
retaining the desirable B\"uchi condition for reinforcement learning
are
able to compete even against optimized SLDBAs.

As outlined in Section \ref{sec:gfm-rl}, a low branching
degree, cautiousness, and forgiveness  make
automata particularly well-suited for learning.
A low branching degree is the easiest among them to quantify, and the \emph{slim automata} tick this box perfectly.
We have also argued that they are contributing towards cautiousness.
For forgiveness, on the other hand, we had to hand-craft automata that satisfy this property.
From a practical point of view, much of the power of this new approach
is in harnessing the power of simulation for learning,
and forgiveness is closely related to simulation.

The natural follow-up research is to tap the full
potential of simulation-based statespace reduction instead of the
limited version---which preserves limit determinism---
that we have implemented.
Besides using this to get the statespace small---useful for model checking---we will
use simulation to construct forgiving automata, which is promising for reinforcement learning.


\clearpage
\bibliographystyle{plain}
\bibliography{papers}

\begin{thebibliography}{10}

\bibitem{BabiakKRS12}
T.~Babiak, M.~K\v{r}et{\'{\i}}nsk{\'{y}}, V.~Reh{\'{a}}k, and J.~Strejcek.
\newblock {LTL} to {B\"uchi} automata translation: Fast and more deterministic.
\newblock In {\em Tools and Algorithms for the Construction and Analysis of
  Systems}, pages 95--109, 2012.

\bibitem{Baier08}
Ch. Baier and J.-P. Katoen.
\newblock {\em Principles of Model Checking}.
\newblock MIT Press, 2008.

\bibitem{CY88}
C.~Courcoubetis and M.~Yannakakis.
\newblock Verifying temporal properties of finite-state probabilistic programs.
\newblock In {\em Foundations of Computer Science}, pages 338--345. IEEE, 1988.

\bibitem{Courco95}
C.~Courcoubetis and M.~Yannakakis.
\newblock The complexity of probabilistic verification.
\newblock {\em J. ACM}, 42(4):857--907, July 1995.

\bibitem{deAlfa98}
L.~de~Alfaro.
\newblock {\em Formal Verification of Probabilistic Systems}.
\newblock PhD thesis, Stanford University, 1998.

\bibitem{baselines}
P.~Dhariwal, Ch. Hesse, O.~Klimov, A.~Nichol, M.~Plappert, A.~Radford,
  J.~Schulman, S.~Sidor, Y.~Wu, and P.~Zhokhov.
\newblock Openai baselines.
\newblock \url{https://github.com/openai/baselines}, 2017.

\bibitem{Dill91}
D.~L. Dill, A.~J. Hu, and H.~Wong-Toi.
\newblock Checking for language inclusion using simulation relations.
\newblock In {\em Computer Aided Verification}, pages 255--265, July 1991.
\newblock LNCS 575.

\bibitem{Duret-LutzLFMRX16}
A.~Duret{-}Lutz, A.~Lewkowicz, A.~Fauchille, T.~Michaud, E.~Renault, and L.~Xu.
\newblock Spot 2.0 - {A} framework for {LTL} and $\omega$-automata
  manipulation.
\newblock In {\em Automated Technology for Verification and Analysis}, pages
  122--129, 2016.

\bibitem{Etessa05}
K.~Etessami, T.~Wilke, and R.~A. Schuller.
\newblock Fair simulation relations, parity games, and state space reduction
  for {B\"uchi} automata.
\newblock {\em {SIAM} J. Comput.}, 34(5):1159--1175, 2005.

\bibitem{EtessamiWS05}
K.~Etessami, T.~Wilke, and R.~A. Schuller.
\newblock Fair simulation relations, parity games, and state space reduction
  for {B\"uchi} automata.
\newblock {\em {SIAM} Journal of Computing}, 34(5):1159--1175, 2005.

\bibitem{Gurumu02}
S.~Gurumurthy, R.~Bloem, and F.~Somenzi.
\newblock Fair simulation minimization.
\newblock In {\em Computer Aided Verification (CAV'02)}, pages 610--623, July
  2002.
\newblock LNCS 2404.

\bibitem{Hahn15}
E.~M. Hahn, G.~Li, S.~Schewe, A.~Turrini, and L.~Zhang.
\newblock Lazy probabilistic model checking without determinisation.
\newblock In {\em Concurrency Theory}, pages 354--367, 2015.

\bibitem{Hahn19}
E.~M. Hahn, M.~Perez, S.~Schewe, F.~Somenzi, A.~Trivedi, and D.~Wojtczak.
\newblock Omega-regular objectives in model-free reinforcement learning.
\newblock In {\em Tools and Algorithms for the Construction and Analysis of
  Systems}, pages 395--412, 2019.
\newblock LNCS 11427.

\bibitem{Henzin97}
T.~Henzinger, O.~Kupferman, and S.~Rajamani.
\newblock Fair simulation.
\newblock In {\em Concurrency Theory}, pages 273--287, 1997.
\newblock LNCS 1243.

\bibitem{Henzin06}
T.~A. Henzinger and N.~Piterman.
\newblock Solving games without determinization.
\newblock In {\em Computer Science Logic}, pages 394--409, September 2006.
\newblock LNCS 4207.

\bibitem{kingma2014adam}
Diederik~P. Kingma and Jimmy Ba.
\newblock Adam: A method for stochastic optimization, 2014.

\bibitem{kini2017optimal}
D.~Kini and M.~Viswanathan.
\newblock Optimal translation of {LTL} to limit deterministic automata.
\newblock In {\em Tools and Algorithms for the Construction and Analysis of
  Systems}, pages 113--129, 2017.

\bibitem{Klein14}
J.~Klein, D.~M{\"u}ller, Ch. Baier, and S.~Kl{\"u}ppelholz.
\newblock Are good-for-games automata good for probabilistic model checking?
\newblock In {\em Language and Automata Theory and Applications}, pages
  453--465, 2014.

\bibitem{kvretinsky2018rabinizer}
J.~K{\v{r}}et{\'\i}nsk{\'y}, T.~Meggendorfer, S.~Sickert, and Ch. Ziegler.
\newblock Rabinizer 4: from {LTL} to your favourite deterministic automaton.
\newblock In {\em Computer Aided Verification}, pages 567--577. Springer, 2018.

\bibitem{KretinskyMS18}
J.~K\v{r}et{\'{\i}}nsk{\'{y}}, T.~Meggendorfer, and S.~Sickert.
\newblock Owl: {A} library for $\omega$-words, automata, and {LTL}.
\newblock In {\em Automated Technology for Verification and Analysis}, pages
  543--550, 2018.

\bibitem{Kwiatk11}
M.~Kwiatkowska, G.~Norman, and D.~Parker.
\newblock {PRISM 4.0}: Verification of probabilistic real-time systems.
\newblock In {\em Computer Aided Verification}, pages 585--591, July 2011.
\newblock LNCS 6806.

\bibitem{McNaug66}
R.~McNaughton.
\newblock Testing and generating infinite sequences by a finite automaton.
\newblock {\em Information and Control}, 9:521--530, 1966.

\bibitem{Milner71}
R.~Milner.
\newblock An algebraic definition of simulation between programs.
\newblock {\em Int. Joint Conf. on Artificial Intelligence}, pages 481--489,
  1971.

\bibitem{Piterm07}
N.~Piterman.
\newblock From deterministic {B\"uchi} and {Streett} automata to deterministic
  parity automata.
\newblock {\em Logical Methods in Computer Science}, 3(3):1--21, 2007.

\bibitem{Put94}
M.~L. Puterman.
\newblock {\em Markov Decision Processes: Discrete Stochastic Dynamic
  Programming}.
\newblock John Wiley \& Sons, New York, NY, USA, 1994.

\bibitem{Safra89b}
S.~Safra.
\newblock {\em Complexity of Automata on Infinite Objects}.
\newblock PhD thesis, The Weizmann Institute of Science, March 1989.

\bibitem{Schewe10}
S.~Schewe.
\newblock Beyond hyper-minimisation---minimising {DBAs} and {DPAs} is
  {NP}-complete.
\newblock In {\em Foundations of Software Technology and Theoretical Computer
  Science, {FSTTCS}}, pages 400--411, 2010.

\bibitem{ScheweV12}
S.~Schewe and T.~Varghese.
\newblock Tight bounds for the determinisation and complementation of
  generalised {B\"uchi} automata.
\newblock In {\em Automated Technology for Verification and Analysis}, pages
  42--56, 2012.

\bibitem{ScheweV14}
S.~Schewe and T.~Varghese.
\newblock Determinising parity automata.
\newblock In {\em Mathematical Foundations of Computer Science}, pages
  486--498, 2014.

\bibitem{Schulm17}
J.~Schulman, F.~Wolski, P.~Dhariwal, A.~Radford, and O.~Klimov.
\newblock Proximal policy optimization algorithms.
\newblock {\em CoRR}, abs/1707.06347, 2017.

\bibitem{Sicker16b}
S.~Sickert, J.~Esparza, S.~Jaax, and J.~K\v{r}et\'insk\'y.
\newblock Limit-deterministic {B\"uchi} automata for linear temporal logic.
\newblock In {\em Computer Aided Verification}, pages 312--332, 2016.
\newblock LNCS 9780.

\bibitem{SickertK16}
S.~Sickert and J.~K\v{r}et{\'{\i}}nsk{\'{y}}.
\newblock {MoChiBA}: Probabilistic {LTL} model checking using
  limit-deterministic {B\"uchi} automata.
\newblock In {\em Automated Technology for Verification and Analysis}, pages
  130--137, 2016.

\bibitem{Somenz00}
F.~Somenzi and R.~Bloem.
\newblock Efficient {B\"uchi} automata from {LTL} formulae.
\newblock In {\em Computer Aided Verification}, pages 248--263, July 2000.
\newblock LNCS 1855.

\bibitem{TsaiFVT14}
M.{-}H. Tsai, S.~Fogarty, M.~Y. Vardi, and Y.{-}K. Tsay.
\newblock State of {B\"uchi} complementation.
\newblock {\em Logical Mehods in Computer Science}, 10(4), 2014.

\bibitem{TsaiTH13}
M.{-}H. Tsai, Y.{-}K. Tsay, and Y.{-}S. Hwang.
\newblock {GOAL} for games, omega-automata, and logics.
\newblock In {\em Computer Aided Verification}, pages 883--889, 2013.

\bibitem{Vardi85}
M.~Y. Vardi.
\newblock Automatic verification of probabilistic concurrent finite state
  programs.
\newblock In {\em Foundations of Computer Science}, pages 327--338, 1985.

\end{thebibliography}

\clearpage

\appendix

\section{Reinforcement Learning Case Studies}
\label{sec:rl-exp}

We collect here the models and properties discussed in
Section~\ref{sec:gfm-rl}.  The \texttt{milk} model, shown below as
PRISM code \cite{Kwiatk11}, shows the
effects of low branching degree.  The property of interest is
$\bigwedge_{0 \leq i \leq 4} \eventually\always p_i$.  The slim
automaton has 275 states, while the SLDBA automaton has 7 states.
Since the $p_i$'s are not mutually exclusive, the slim automaton needs
a large number of states to limit the branching degree.  Yet, for a
fixed set of hyperparameter values, the agent learns the optimal
policy with the slim automaton and fails to learn it with the SLDBA.

\begin{lstlisting}[language=prism]
mdp
const int M = 12;

module m
  b : [0..M] init M;
  [a] true -> (b' = b > 0 ? b-1 : b);
endmodule

label "p0" = b > 0;
label "p1" = b > 1;
label "p2" = b > 2;
label "p3" = b > 3;
label "p4" = b = 0;
\end{lstlisting}
The \texttt{oddChocolates} model shows the benefits of cautiousness in
resolving nondeterminism.  The property is
$\eventually\always \mathtt{odd}$.  The slim automaton has 4 states
and is limit-deterministic, while the usual SLDBA, optimized for size,
has 3 states.  The extra state produced by the breakpoint construction
allows the slim automaton to jump to the accepting part only after
seeing two consecutive ``odd'' states.  As a result, for a fixed set
of hyperparameter values, and the tool \textsc{Mungojerrie}
\cite{Hahn19}, the agent reliably learns an optimal policy, unlike the
case of the smaller SLDBA automaton.
\begin{lstlisting}[language=prism]
mdp
const int N = 5;
const int M = 12;

module pluck
  b0 : [0..M] init M;
  b1 : [0..M] init M;
  b2 : [0..M] init M;
  b3 : [0..M] init M;
  b4 : [0..M] init M;
  [a0] true -> (b0' = b0 > 0 ? b0-1 : b0);
  [a1] true -> (b1' = b1 > 0 ? b1-1 : b1);
  [a2] true -> (b2' = b2 > 0 ? b2-1 : b2);
  [a3] true -> (b3' = b3 > 0 ? b3-1 : b3);
  [a4] true -> (b4' = b4 > 0 ? b4-1 : b4);
endmodule

label "odd" = mod(b0+b1+b2+b3+b4,2) = 1;
\end{lstlisting}
The \texttt{forgiveness} model below illustrates the effects of
forgiveness.  Its main feature is to produce, while in its transient
states, deceiving hints for the learner.  Two automata for
$(\eventually\always x) \vee (\always\eventually y)$ are shown in
Figure~\ref{fig:forgiveness}.  The slim automaton produced by the
breakpoint construction is not limit-deterministic.  It accepts
behaviors that satisfy $\always\eventually y$ before (irreversibly)
jumping and behaviors that satisfy $\eventually\always x$ after
jumping.  Therefore the agent can recover from mistakenly believing
that an MDP run will satisfy $\always\eventually y$, while it ends up
satisfying $\eventually\always x$, but not vice versa.  Accordingly,
the results with the slim automaton are in between those of the two
automata of Figure~\ref{fig:forgiveness}.
\begin{lstlisting}[language=prism]
mdp
const double p = 1/2;
const double q = 0.1;
const int N = 5;

module m
  s : [0..7] init 0;
  d : [0..N] init N;
  [a] s=0 -> p : (s'=1) + (1-p) : (s'=2);
  [b] s=0 -> p : (s'=2) + (1-p) : (s'=3);
  [c] (s=1 | s=2) & d > 0 -> q : (d'=d-1) + (1-q) : true;
  [c] (s=1 | s=2) & d = 0 -> (s'=s+3);
  [c] s=3 -> true;
  [a] s=4 | s=5 -> q : (s'=s+2) + (1-q) : true;
  [b] s=4 | s=5 -> true;
  [c] s=6 | s=7 -> (s'=s-2);
endmodule

label "x" = s=1 | s=5;
label "y" = s=2 | s=6;
\end{lstlisting}


\section{Proving Simulation}
\label{sec:proving-simulation}
\subsection{Classic Simulation}
In the following, we discuss how we can prove that a given B\"uchi automaton ${\cal D} = \langle \Sigma,Q_{\cal D},Q_{{\cal D},0},\Delta_{\cal D},\Gamma_{\cal D} \rangle$ (duplicator) simulates a B\"uchi automaton ${\cal S} = \langle \Sigma,Q_{\cal S},Q_{{\cal S},0},\Delta_{\cal S},\Gamma_{\cal S} \rangle$ (spoiler).
The basic idea is to construct a parity game $\cal G$ such that the even player wins if and only if the simulation holds.
The construction of the game is fairly standard and closely follows the outline from Subsection ~\ref{ssec:simulate}.

In our notion, we define parity games as follows.
\begin{definition}
  A \emph{transition-labelled maximum parity game} is a tuple
  \[
    {\cal P} = \langle Q_0, Q_1, \Delta, c \rangle
  \]
  where for $Q = Q_0 \cup Q_1$ we have
  \begin{itemize}
  \item $Q_0 \cap Q_1 = \emptyset$,
  \item $\Delta \subseteq Q \times Q$,
  \item $c\colon Q \times Q \rightharpoonup \mathbb{N}$ where $\mathit{Dom}(c) = \Delta$.
  \end{itemize}

  An \emph{infinite play} is an infinite sequence $p = q_0 q_1 \ldots \in Q^\omega$ such for for all $i \geq 0$ it is $(q_i, q_{i+1}) \in \Delta$.
  An infinite play $p$ is \emph{winning} if the highest colour which appears infinitely often is even, that is
  $\max \left\{ c' \mid \forall i \geq 0 \exists j \geq i.\ c' = c(q_j, q_{j+1})\right\}\ \mathrm{mod}\ 2 = 0$.
  By $\mathrm{infplays}({\cal P})$ we denote the set of all infinite plays of ${\cal P}$.
  A \emph{finite play} is a finite sequence $p = q_0 q_1 \ldots q_n \in Q^*$ such for for all $i, 0 \leq i \leq n - 1$ it is $(q_i, q_{i+1}) \in \Delta$.
  By $\mathrm{finplays}_i({\cal P})$ we denote the set of all finite plays $p$ of ${\cal P}$ such that $q_n \in Q_i$ for $i \in \{0, 1\}$.
  A \emph{strategy} for player $i$ is a function $f_i\colon \mathrm{finplays}_i({\cal P}) \to Q$ such that for $p = q_0 q_1 \ldots q_n$ we have $(q_n, f(p)) \in \Delta$.
  Given a pair of a player $0$ and player $1$ strategies $\langle f_0, f_1\rangle$, the \emph{induced play} starting from state $q$ is
  $p = q_0 q_1 q_2 \ldots$
  where $q_0 = q$ and for all $k \geq 0$ we have $q_{k+1} = f_i(q_k)$ where $q_k \in \Delta_i$.
  A player $0$ strategy $f_0$ is \emph{winning} in a state $q$ if for every player $1$ strategy $f_1$ the induced play is winning.
\end{definition}
Note that here colours are attached to transitions.
This change is only for technical convenience, because colours naturally occur on the edges rather than states of the games we construct.
Models of this type can either be transformed to state-labelled parity games or solved by slightly adapted versions of existing algorithms.

The parity game we construct is then as follows:
\begin{definition}
  Consider two B\"uchi automata ${\cal S} = \langle \Sigma,Q_{\cal S},Q_{{\cal S},0},\Delta_{\cal S},\Gamma_{\cal S} \rangle$ and ${\cal D} = \langle \Sigma,Q_{\cal D},Q_{{\cal D},0},\Delta_{\cal D},\Gamma_{\cal D} \rangle$.
  The \emph{classical simulation parity game} ${\cal P}_{{\cal S}, {\cal D},\mathrm{sim0}}$ is defined as
  \[
    {\cal P}_{{\cal S}, {\cal D},\mathrm{sim0}} = \langle Q_0, Q_1, \Delta, c \rangle
    \]
    where
    \begin{itemize}
    \item $Q_0 = Q_{\cal S} \times Q_{\cal D} \times \Sigma$,
    \item $Q_1 = Q_{\cal S} \times Q_{\cal D}$,
    \item $\Delta = \Delta_1 \cup \Delta_0$ where
    \item $\Delta_0 = \left\{ ((q_{\cal S}, q_{\cal D}, a), (q_{\cal S}, q_{\cal D}')) \mid (q_{\cal D}, a, q_{\cal D}') \in \Delta_{\cal D} \right\}$
    \item $c((q_{\cal S}, q_{\cal D}, a), (q_{\cal S}, q_{\cal D}')) =
    \begin{cases}
    2 & (q_{\cal D}, a, q_{\cal D}') \in \Gamma_{\cal D}\\
    0 & \text{else}
    \end{cases}$
    \item $\Delta_1 = \left\{ ((q_{\cal S}, q_{\cal D}), (q_{\cal S}', q_{\cal D}, a)) \mid (q_{\cal S}, a, q_{\cal S}') \in \Delta_{\cal S} \right\}$
    \item $c((q_{\cal S}, q_{\cal D}), (q_{\cal S}', q_{\cal D}, a)) =
    \begin{cases}
    1 & (q_{\cal S}, a, q_{\cal S}') \in \Gamma_{\cal S}\\
    0 & \text{else}
    \end{cases}$
    \end{itemize}
\end{definition}
The states $Q_1 = Q_{\cal S} \times Q_{\cal D}$ are the states controlled by the spoiler.
A state $(q_{\cal S}, q_{\cal D})$ represents the situation where the spoiler is in state $q_{\cal S}$ and the duplicator is in state $q_{\cal D}$.
States $Q_0 = Q_{\cal S} \times Q_{\cal D} \times \Sigma$ are controlled by the duplicator.
A state $(q_{\cal S}, q_{\cal D}, a)$ represents the situation where the spoiler has chosen a transition with label $a$ and has moved to state $q_{\cal S}$ then;
the component $a$ has to be included in the state description, because the duplicator has to react with a transition with the same label as the one chosen by the spoiler.
The transitions of $\Delta_1$ are the ones of the spoiler.
The spoiler can choose any input $a$ from its current state $q_{\cal S}$ and then move to its successor state $q_{\cal S}'$ if $(q_{\cal S}, a, q_{\cal S}')$ is a valid transition in the spoiler automaton.
If this transition is an accepting transition (in the spoiler automaton), then it will have colour $1$, otherwise $0$.
The transitions of $\Delta_0$ are the ones of the duplicator.
In a state $(q_{\cal S}, q_{\cal D}, a)$, the duplicator has to choose a transition (in the B\"uchi automaton) labelled with $a$, because this was the choice of input of the spoiler.
If this transition is accepting, the transition in the game will have label $2$, otherwise it will have label $0$.

Now, a play of the constructed parity game exactly describes a simulation game of Subsection~\ref{ssec:simulate}.
Also, the play is winning if and only if the simulation game is winning:
If the duplicator wins the game, then either a) neither the spoiler nor the duplicator have infinitely many accepting transitions or b) the duplicator has infinitely many accepting transitions.
In case a), the resulting colour of the play will be $0$.
In case b), the resulting colour will be $2$.
In both cases, it is even and the even player wins.
Because of this, we have the following:

\begin{lemma}
  \label{lem:classicalsimdef}
  Consider B\"uchi automata ${\cal S} = \langle \Sigma,Q_{\cal S},Q_{{\cal S},0},\Delta_{\cal S},\Gamma_{\cal S} \rangle$ and ${\cal D} = \langle \Sigma,Q_{\cal D},Q_{{\cal D},0},\Delta_{\cal D},\Gamma_{\cal D} \rangle$ and the classical simulation parity game ${\cal P}_{{\cal S}, {\cal D}}$.
  Assume that for all $q_{\cal S} \in Q_{{\cal S},0}$ we have $q_{\cal D} \in Q_{{\cal D},0}$ such that $(q_{\cal S}, q_{\cal D})$ is a winning state in the parity game.
  Then ${\cal D}$ simulates ${\cal S}$.
  Otherwise, ${\cal D}$ does not simulate ${\cal S}$.
\end{lemma}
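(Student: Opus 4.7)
The plan is to show that the constructed parity game ${\cal P}_{{\cal S}, {\cal D},\mathrm{sim0}}$ is a faithful encoding of the classical simulation game from Subsection~\ref{ssec:simulate}, so that strategies in the two games correspond to one another and the winning conditions coincide. First I would set up an explicit correspondence between positions in the two games: a player-1 position $(q_{\cal S}, q_{\cal D})$ represents a situation where the spoiler's pebble is on $q_{\cal S}$ and the duplicator's pebble is on $q_{\cal D}$ and it is the spoiler's turn, while a player-0 position $(q_{\cal S}, q_{\cal D}, a)$ represents the intermediate configuration where the spoiler has committed to input letter $a$ and moved to $q_{\cal S}$, and the duplicator must now reply with an $a$-transition from $q_{\cal D}$. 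The transition relations $\Delta_1$ and $\Delta_0$ allow exactly the moves permitted to the spoiler and duplicator, respectively, in the simulation game, so strategies in the two games are in bijective correspondence (modulo the treatment of the initial choice, discussed below).

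Next I would show that the max-parity acceptance condition matches the simulation winning condition transition-for-transition. Colour $2$ is emitted precisely when the duplicator traverses an accepting transition in $\Gamma_{\cal D}$, and colour $1$ precisely when the spoiler does so in $\Gamma_{\cal S}$; all other edges receive colour $0$. The duplicator wins the simulation game iff his run on $\cal D$ is accepting or the spoiler's run on $\cal S$ is not. I would then case-split on the maximum colour appearing infinitely often: if it is $2$, the duplicator's run is accepting (duplicator wins, even); if it is $1$, the spoiler accepts infinitely often but the duplicator does not (spoiler wins, odd); if it is $0$, neither run is accepting (the spoiler's run is rejecting, so the duplicator wins, even). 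Hence player $0$ wins in the parity game iff the duplicator wins in the corresponding simulation play.

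For the initial phase, the simulation game is begun by the spoiler placing her pebble on some $q_{\cal S} \in Q_{{\cal S},0}$, after which the duplicator places his on some $q_{\cal D} \in Q_{{\cal D},0}$. The hypothesis of the lemma---that for every $q_{\cal S} \in Q_{{\cal S},0}$ there is a $q_{\cal D} \in Q_{{\cal D},0}$ with $(q_{\cal S},q_{\cal D})$ winning for player $0$---exactly encodes the duplicator's ability to respond to any initial move by the spoiler and continue with a winning strategy afterwards. Composing the initial response with the strategy given by the parity-game encoding yields a winning strategy for the duplicator in the simulation game, establishing the first implication.

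For the converse ``otherwise'' direction, I would appeal to positional determinacy of parity games: if the hypothesis fails, there exists some $q_{\cal S} \in Q_{{\cal S},0}$ such that for \emph{every} $q_{\cal D} \in Q_{{\cal D},0}$ the position $(q_{\cal S},q_{\cal D})$ is won by player $1$. The spoiler's winning strategy in the simulation game is then: open by choosing this particular $q_{\cal S}$, and, whichever $q_{\cal D}$ the duplicator answers with, follow the player-$1$ winning strategy from $(q_{\cal S}, q_{\cal D})$; by the transition-by-transition correspondence established above, the resulting simulation play is losing for the duplicator. The main subtlety, and the step I expect to require the most care, is the colour-assignment argument: one must verify that giving duplicator-accepting edges colour $2$, spoiler-accepting edges colour $1$, and everything else colour $0$---together with the max-parity rule---produces precisely the asymmetric disjunctive acceptance condition of the simulation game, and in particular that colour $2$ correctly dominates colour $1$ so that duplicator wins even when both runs are accepting.
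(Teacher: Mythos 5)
Your proposal is correct and follows essentially the same route as the paper: the paper justifies the lemma by the same step-for-step correspondence between plays of the parity game and plays of the simulation game, together with the same case split on the maximum colour occurring infinitely often (colour $0$ or $2$ means the duplicator wins, colour $1$ means the spoiler wins). You are somewhat more thorough than the paper in spelling out the handling of the initial states and the ``otherwise'' direction via determinacy of parity games, but these are refinements of the same argument rather than a different approach.
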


To use the above lemma to show that simulation holds, the usual algorithms for solving parity games can be used (with a slight adaption to take into account the fact that the edges are labelled with a parity rather than the states).
We have implemented a variation of the McNaughton algorithm~\cite{McNaug66}.
Note that the parity games we construct have no more than three colours, for which a specialised algorithms exists~\cite{EtessamiWS05}.
However, because in our experience it always took much more time to construct the parity game than to solve this, we did not apply this specialised algorithm so far.

\subsection{AEC Simulation}
Classic simulation is not required to preserve good for MDP-ness, and the above parity game might fail to demonstrate a good for MDP property.
In Subsection~\ref{ssec:aec.simulate}, we have discussed of how the simulation can be refined such that more automata can be proven to be sufficiently similar for our purpose (AEC simulation).
In the next definition, we force the spoiler to choose a transition which is to be repeated infinitely often.

\begin{definition}
  \label{def:classicalsimdef}
  Consider two B\"uchi automata ${\cal S} = \langle \Sigma,Q_{\cal S},Q_{{\cal S},0},\Delta_{\cal S},\Gamma_{\cal S} \rangle$ and ${\cal D} = \langle \Sigma,Q_{\cal D},Q_{{\cal D},0},\Delta_{\cal D},\Gamma_{\cal D} \rangle$.
  The \emph{accepting transition simulation parity game} ${\cal P}_{{\cal S}, {\cal D},\mathrm{sim1}}$ is defined as
  \[
    {\cal P}_{{\cal S}, {\cal D},\mathrm{sim1}} = \langle Q_0, Q_1, \Delta, c \rangle
    \]
    where
    \begin{itemize}
    \item $Q_0 = Q_{\cal S} \times Q_{\cal D} \times \Sigma \cup Q_{\cal S} \times Q_{\cal D} \times \Sigma \times (Q_{\cal S} \times \Sigma \times Q_{\cal S})$,
    \item $Q_1 = Q_{\cal S} \times Q_{\cal D} \cup Q_{\cal S} \times Q_{\cal D} \times (Q_{\cal S} \times \Sigma \times Q_{\cal S})$,
    \item $\Delta = \Delta_{0,i} \cup \Delta_{1,i} \cup \Delta_{1,t} \cup \Delta_{0,f} \cup \Delta_{1,f} $ where
    \item $\Delta_{0,i} = \left\{ ((q_{\cal S}, q_{\cal D}, a), (q_{\cal S}, q_{\cal D}')) \mid (q_{\cal D}, a, q_{\cal D}') \in \Delta_{\cal D} \right\}$
    \item $c((q_{\cal S}, q_{\cal D}, a), (q_{\cal S}, q_{\cal D}')) = 0$
    \item $\Delta_{1,i} = \left\{ ((q_{\cal S}, q_{\cal D}), (q_{\cal S}', q_{\cal D}, a)) \mid (q_{\cal S}, a, q_{\cal S}') \in \Delta_{\cal S}\right\}$,
    \item $c((q_{\cal S}, q_{\cal D}), (q_{\cal S}', q_{\cal D}, a)) = 0$,
    \item $\Delta_t = \left\{ ((q_{\cal S}, q_{\cal D}), (q_{\cal S}', q_{\cal D}, a, (q_{\cal S}, a, q_{\cal S}'))) \mid (q_{\cal S}, a, q_{\cal S}') \in \Gamma_{\cal S}\right\}$
    \item $c((q_{\cal S}, q_{\cal D}), (q_{\cal S}', q_{\cal D}, a, (q_{\cal S}, a, q_{\cal S}'))) = 0$,
    \item $\Delta_{0,f} = \left\{ ((q_{\cal S}, q_{\cal D}, a, e), (q_{\cal S}, q_{\cal D}', e)) \mid (q_{\cal D}, a, q_{\cal D}') \in \Delta_{\cal D} \right\}$
    \item $c((q_{\cal S}, q_{\cal D}, a, e), (q_{\cal S}, q_{\cal D}', e)) = \begin{cases}
    2 & (q_{\cal D}, a, q_{\cal D}') \in \Gamma_{\cal D}\\
    0 & \text{else}
    \end{cases}$
    \item $\Delta_{1,f} = \left\{ ((q_{\cal S}, q_{\cal D}, e), (q_{\cal S}', q_{\cal D}, a, e)) \mid (q_{\cal S}, a, q_{\cal S}') \in \Delta_{\cal S} \right\}$
    \item $c((q_{\cal S}, q_{\cal D}, e), (q_{\cal S}', q_{\cal D}, a, e)) = \begin{cases}
    1 & (q_{\cal S}, a, q_{\cal S}') = e\\
    0 & \text{else}
    \end{cases}$
    \end{itemize}
\end{definition}
For the two players, there are now two different types of states.
The first type is like the state for the classical simulation parity game.
The second type is extended by $(Q_{\cal S} \times \Sigma \times Q_{\cal S})$, such that it can store edges of the spoiler.
There are now three types of transitions:
$\Delta_{0,i}$ and $\Delta_{1,i}$ are similar to the ones of $\Delta_0$ and $\Delta_0$ of the classical simulation parity game.
The difference is that the colour is always $0$.
This implies that the spoiler cannot win by just using this type of transitions.
Instead, she has to choose a transition of $\Delta_t$, which corresponds to choosing an edge to occur infinitely often.
To reduce the size of the state-space and to ease definition, we restrict this choice such that it can only choose accepting edges, because it cannot win with non-accepting edges anyway.
For the same reason, we only allow choosing the edge it just took.
Transitions $\Delta_{0,f}$ and $\Delta_{1,f}$ correspond to the situation where the spoiler has already chosen an edge.
Transitions of $\Delta_{0,f}$ are labelled with $2$ if they correspond to an accepting transition of the duplicator and $0$ otherwise.
Transitions of $\Delta_{1,f}$ are labelled with $1$ if the edge is taken which the spoiler chose to appear infinitely often.

\begin{lemma}
  \label{lem:accsimdef}
  Consider B\"uchi automata ${\cal S} = \langle \Sigma,Q_{\cal S},Q_{{\cal S},0},\Delta_{\cal S},\Gamma_{\cal S} \rangle$ and ${\cal D} = \langle \Sigma,Q_{\cal D},Q_{{\cal D},0},\Delta_{\cal D},\Gamma_{\cal D} \rangle$ and the accepting simulation parity game ${\cal P}_{{\cal S}, {\cal D}}$.
  Assume that for all $q_{\cal S} \in Q_{{\cal S},0}$ we have $q_{\cal D} \in Q_{{\cal D},0}$ such that $(q_{\cal S}, q_{\cal D})$ is a winning state in the parity game.
  Then ${\cal D}$ accepting transition simulates ${\cal S}$.
  Otherwise, ${\cal D}$ does not simulate ${\cal S}$.
\end{lemma}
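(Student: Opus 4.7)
The plan is to set up a bijective correspondence between plays of the parity game $\mathcal{P}_{\mathcal{S},\mathcal{D},\mathrm{sim1}}$ starting from $(q_{\mathcal{S}}, q_{\mathcal{D}})$ and plays of the accepting transition AEC simulation game from the same initial pair, verify that the two winning conditions coincide on matched plays, and then invoke determinacy of parity games to handle the ``otherwise'' clause.

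First I would unfold a play of the parity game into three phases: an initial segment alternating through $\Delta_{1,i}$ (spoiler chooses letter and transition) and $\Delta_{0,i}$ (duplicator responds), at most one commit edge through $\Delta_t$ (spoiler announces the accepting transition she just took as the one to be seen infinitely often), and a final segment alternating through $\Delta_{1,f}$ and $\Delta_{0,f}$ in which the committed edge $e \in \Gamma_{\mathcal{S}}$ is carried as a memory component. Round by round this yields exactly the legal moves of the corresponding simulation game before and after the AEC claim; conversely every simulation game play has a unique image in the parity game. The only subtlety here is that the spoiler may elect never to take a $\Delta_t$ edge, which matches exactly the case in the simulation game where no AEC claim is ever made.

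Next I would match up the winning conditions. In the initial phase, every edge has color $0$, so a play that remains there forever is won by the duplicator, matching winning condition (1) of the simulation game of Subsection~\ref{ssec:aec.simulate}. Once the final phase is entered, color $2$ is assigned precisely at accepting duplicator transitions and color $1$ precisely when the spoiler replays the stored triple $e$. The parity condition ``the maximum color seen infinitely often is even'' then translates as follows: color $2$ occurs infinitely often iff the duplicator's run is accepting (condition (2)); neither $1$ nor $2$ occurs infinitely often iff the spoiler fails to comply with her claim (condition (3)); condition (4) is automatically absorbed because the stored transition $e$ is itself accepting, so compliance already forces the spoiler's run to be accepting. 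Because the correspondence is move by move and strategies in one game translate to strategies in the other, a duplicator winning strategy in the parity game from every $(q_{\mathcal{S}}, q_{\mathcal{D}})$ promised in the hypothesis supplies a duplicator winning strategy in the AEC simulation game, establishing the first direction.

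For the contrapositive, I would appeal to determinacy of finite parity games. If the duplicator has no winning strategy from some $q_{\mathcal{S}} \in Q_{\mathcal{S},0}$ against every $q_{\mathcal{D}} \in Q_{\mathcal{D},0}$, the spoiler has a winning strategy from the corresponding position, and pushing it across the correspondence gives a spoiler winning strategy in the AEC simulation game, so $\mathcal{D}$ does not accepting transition simulate $\mathcal{S}$. The main technical obstacle is the bookkeeping around the commit edge: one must check that storing $e = (q_{\mathcal{S}}, a, q_{\mathcal{S}}')$ faithfully represents the spoiler's commitment, that color $1$ fires precisely on future re-traversals of $e$, and that the one-shot nature of the commit is enforced by the edge structure of $\mathcal{P}_{\mathcal{S},\mathcal{D},\mathrm{sim1}}$ (there is no edge back from the final-phase state space to the initial-phase state space). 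Beyond this, the argument is a routine exercise in translating between a simulation-style game and its parity-game encoding.
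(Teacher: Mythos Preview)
Your proposal is correct and follows essentially the same approach as the paper: the paper's own justification consists of the explanatory prose surrounding the definition of $\mathcal{P}_{\mathcal{S},\mathcal{D},\mathrm{sim1}}$ (describing how the initial, commit, and final phases encode the moves of the accepting-transition AEC simulation game and how the colours $0$, $1$, $2$ encode its winning condition) together with the remark that the lemma ``can be used \ldots\ in the same way as Lemma~\ref{lem:classicalsimdef}.'' Your write-up simply makes this correspondence explicit and adds the determinacy appeal for the converse, which the paper leaves implicit.
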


Lemma~\ref{lem:accsimdef} can be used to prove or disprove simulation in the same way as Lemma~\ref{lem:classicalsimdef}.

The simulation relation in which the duplicator is allowed to change the edge to be repeated infinitely often can be implemented in the following parity game:

\begin{definition}
  Consider two B\"uchi automata ${\cal S} = \langle \Sigma,Q_{\cal S},Q_{{\cal S},0},\Delta_{\cal S},\Gamma_{\cal S} \rangle$ and ${\cal D} = \langle \Sigma,Q_{\cal D},Q_{{\cal D},0},\Delta_{\cal D},\Gamma_{\cal D} \rangle$.
  The \emph{accepting transition with update simulation parity game} ${\cal P}_{{\cal S}, {\cal D},\mathrm{sim2}}$ is defined as
  \[
    {\cal P}_{{\cal S}, {\cal D},\mathrm{sim2}} = \langle Q_0, Q_1, \Delta, c \rangle
    \]
    where
    \begin{itemize}
    \item $Q_0 = Q_{\cal S} \times Q_{\cal D} \times \Sigma \cup Q_{\cal S} \times Q_{\cal D} \times (Q_{\cal S} \times \Sigma \times Q_{\cal S}) \times (Q_{\cal S} \times \Sigma \times Q_{\cal S})$,
    \item $Q_1 = Q_{\cal S} \times Q_{\cal D} \cup Q_{\cal S} \times Q_{\cal D} \times (Q_{\cal S} \times \Sigma \times Q_{\cal S})$,
    \item $\Delta = \Delta_{0,i} \cup \Delta_{1,i} \cup \Delta_{1,t} \cup \Delta_{0,f} \cup \Delta_e \cup \Delta_{1,f} $ where
    \item $\Delta_{0,i} = \left\{ ((q_{\cal S}, q_{\cal D}, a), (q_{\cal S}, q_{\cal D}')) \mid (q_{\cal D}, a, q_{\cal D}') \in \Delta_{\cal D} \right\}$
    \item $c((q_{\cal S}, q_{\cal D}, a), (q_{\cal S}, q_{\cal D}')) = 0$,
    \item $\Delta_{1,i} = \left\{ ((q_{\cal S}, q_{\cal D}), (q_{\cal S}', q_{\cal D}, a)) \mid (q_{\cal S}, a, q_{\cal S}') \in \Delta_{\cal S}\right\}$,
    \item $c((q_{\cal S}, q_{\cal D}), (q_{\cal S}', q_{\cal D}, a)) = 0$
    \item $\Delta_t = \left\{ ((q_{\cal S}, q_{\cal D}), (q_{\cal S}', q_{\cal D}, (q_{\cal S}, a, q_{\cal S}'), (q_{\cal S}, a, q_{\cal S}'))) \mid (q_{\cal S}, a, q_{\cal S}') \in \Gamma_{\cal S}\right\}$
    \item $c((q_{\cal S}, q_{\cal D}), 0, (q_{\cal S}', q_{\cal D}, (q_{\cal S}, a, q_{\cal S}'), (q_{\cal S}, a, q_{\cal S}'))) = 0$,
    \item $\Delta_{0,f} = \left\{ ((q_{\cal S}, q_{\cal D}, (\hat{q_{\cal S}}, a, \hat{q_{\cal S}}), e), (q_{\cal S}, q_{\cal D}', e)) \mid (q_{\cal D}, a, q_{\cal D}') \in \Delta_{\cal D} \right\}$,
    \item $c((q_{\cal S}, q_{\cal D}, (\hat{q_{\cal S}}, a, \hat{q_{\cal S}}), e), (q_{\cal S}, q_{\cal D}', e)) = \begin{cases}
    2 & (q_{\cal D}, a, q_{\cal D}') \in \Gamma_{\cal D}\\
    0 & \text{else}
    \end{cases}$
    \item $\Delta_e = \left\{ ((q_{\cal S}, q_{\cal D}, (\hat{q_{\cal S}}, a, \hat{q_{\cal S}}), e), (q_{\cal S}, q_{\cal D}', (\hat{q_{\cal S}}, a, \hat{q_{\cal S}}))) \mid (q_{\cal D}, a, q_{\cal D}') \in \Delta_{\cal D} \right\}$,
    \item $c((q_{\cal S}, q_{\cal D}, (\hat{q_{\cal S}}, a, \hat{q_{\cal S}}), e), (q_{\cal S}, q_{\cal D}', (\hat{q_{\cal S}}, a, \hat{q_{\cal S}}))) = \begin{cases}
    2 & (q_{\cal D}, a, q_{\cal D}') \in \Gamma_{\cal D}\\
    0 & \text{else}\\
    \end{cases}$
    \item $\Delta_{1,f} = \left\{ ((q_{\cal S}, q_{\cal D}, e), (q_{\cal S}', q_{\cal D}, (q_{\cal S}, a, q_{\cal S}'), e)) \mid (q_{\cal S}, a, q_{\cal S}') \in \Delta_{\cal S} \right\}$,
    \item $c((q_{\cal S}, q_{\cal D}, e), (q_{\cal S}', q_{\cal D}, (q_{\cal S}, a, q_{\cal S}'), e)) = \begin{cases}
    1 & (q_{\cal S}, a, q_{\cal S}') = e\\
    0 & \text{else}
    \end{cases}$
    \end{itemize}
\end{definition}
Compared to the accepting transition simulation parity game, we have changed the second type of duplicator states.
The spoiler usually chooses an input, and then proceeds to a successor state with this input.
Rather than just storing the input, we store the whole edge which the spoiler chose.
This allows to change the edge which is expected to be seen infinitely often for the spoiler to win by using the transitions of type $\Delta_e$.

\begin{lemma}
  \label{lem:accupdatesimdef}
  Consider B\"uchi automata ${\cal S} = \langle \Sigma,Q_{\cal S},Q_{{\cal S},0},\Delta_{\cal S},\Gamma_{\cal S} \rangle$ and ${\cal D} = \langle \Sigma,Q_{\cal D},Q_{{\cal D},0},\Delta_{\cal D},\Gamma_{\cal D} \rangle$ and the accepting simulation with update parity game ${\cal P}_{{\cal S}, {\cal D}}$.
  Assume that for all $q_{\cal S} \in Q_{{\cal S},0}$ we have $q_{\cal D} \in Q_{{\cal D},0}$ such that $(q_{\cal S}, q_{\cal D})$ is a winning state in the parity game.
  Then ${\cal D}$ accepting transition with update simulates ${\cal S}$.
  Otherwise, ${\cal D}$ does not simulate ${\cal S}$.
\end{lemma}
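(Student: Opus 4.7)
The plan is to establish a bijective correspondence between plays of the parity game ${\cal P}_{{\cal S}, {\cal D},\mathrm{sim2}}$ and plays of the accepting-transition-with-update AEC simulation game between ${\cal S}$ (spoiler) and ${\cal D}$ (duplicator), and then to verify that the parity winning condition (max colour seen infinitely often is even) coincides with the duplicator-winning conditions of the simulation game. Combined with memoryless determinacy of parity games, this yields both directions of the ``iff.'' The argument closely mirrors the proof of Lemma~\ref{lem:accsimdef}, so I would present it as an extension of that proof, focusing only on the new machinery introduced by the update transitions $\Delta_e$.

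First, I would spell out the state and transition correspondence. The pre-claim states $Q_{\cal S} \times Q_{\cal D}$ and $Q_{\cal S} \times Q_{\cal D} \times \Sigma$, connected by $\Delta_{1,i}$ and $\Delta_{0,i}$, encode the initial phase of the simulation, while $\Delta_t$ models the spoiler's AEC claim and commits her to the accepting edge she just took. The post-claim spoiler states in $Q_{\cal S} \times Q_{\cal D} \times (Q_{\cal S} \times \Sigma \times Q_{\cal S})$ record the currently committed edge $e$; the post-claim duplicator states, whose extra component is the spoiler's just-taken edge, offer the duplicator the choice between retaining $e$ via $\Delta_{0,f}$ or replacing it with the just-taken edge via $\Delta_e$. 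Every legal move sequence of the simulation game thus maps bijectively onto a play of the parity game, and strategies translate back and forth in the obvious way.

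Second, I would verify the colouring. Colour $2$ is placed on exactly those post-claim duplicator transitions (in $\Delta_{0,f}$ or $\Delta_e$) whose underlying ${\cal D}$-transition is accepting, colour $1$ is placed on exactly those post-claim spoiler transitions that match the currently committed edge, and all other transitions receive colour $0$. A case split on the maximum colour seen infinitely often then aligns with the four duplicator-winning conditions: max colour $0$ occurs either when the spoiler never triggers $\Delta_t$ (she never makes the AEC claim) or when she eventually stops hitting the current commitment, both of which are duplicator wins; max colour $2$ occurs precisely when the duplicator's run in the final phase is accepting; max colour $1$ corresponds to the spoiler complying with the current commitment infinitely often while the duplicator fails to accept, which is the sole spoiler-winning scenario. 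Hence $(q_{\cal S}, q_{\cal D})$ is winning for the even player iff the duplicator wins the AEC simulation game from $(q_{\cal S}, q_{\cal D})$, and the lemma follows.

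I expect the main obstacle to lie in the analysis of infinite plays in which the duplicator exercises $\Delta_e$ infinitely often, so that the committed-edge component of the state never stabilises. The notion of ``compliance with the AEC claim'' must then be read pointwise, as compliance with whatever edge is committed at each step, rather than with a single fixed edge. The key observation that resolves this is that colour $1$ is defined precisely in this pointwise fashion by $\Delta_{1,f}$, so that an infinite play is spoiler-winning in the parity game iff she eventually hits the \emph{current} commitment infinitely often, which is the natural semantics of the update variant of the simulation game. Once this pointwise alignment is in place, the remainder of the argument reduces to the finite-update case and is handled exactly as in Lemma~\ref{lem:accsimdef}.
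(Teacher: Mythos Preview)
Your proposal is correct and is precisely the argument the paper leaves implicit: the paper gives no proof of Lemma~\ref{lem:accupdatesimdef} at all, stating only that it ``can be used in the same way as Lemma~\ref{lem:accsimdef},'' which in turn is justified only by the informal explanation preceding Lemma~\ref{lem:classicalsimdef}. Your play-by-play correspondence and colour case-split are exactly the details that justification would have to contain, so the approaches coincide.

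One small point worth tightening: you claim a \emph{bijection} between plays of the simulation game and plays of ${\cal P}_{{\cal S},{\cal D},\mathrm{sim2}}$, but the parity-game construction lets $\Delta_e$ update the stored edge to \emph{any} just-taken spoiler transition, whereas the informal description of the ``accepting transition with update'' game in Section~\ref{ssec:aec.simulate} restricts updates to accepting transitions. So the parity game offers the duplicator strictly more moves. This does not break the argument---updating to a non-accepting edge only makes it easier for the spoiler to produce colour~$1$ without producing an accepting run, so an optimal even-player strategy never uses such an update, and the winning regions coincide---but the map is an injection of simulation-game plays into parity-game plays rather than a bijection. If you want the cleanest statement, either note this asymmetry explicitly or observe that restricting $\Delta_e$ to accepting spoiler edges yields an equivalent game.
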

Again, Lemma~\ref{lem:accupdatesimdef} can be used in the same way as Lemma~\ref{lem:accsimdef}.


\end{document}